\newcommand{\appref}[1]{\hyperref[#1]{{Appendix~\ref*{#1}}}}
\newcommand{\be}{\begin{eqnarray} \begin{aligned}}
\newcommand{\ee}{\end{aligned} \end{eqnarray} }
\newcommand{\benn}{\begin{eqnarray*} \begin{aligned}}
\newcommand{\eenn}{\end{aligned} \end{eqnarray*}}
\newcommand{\cancel}[1]{} 
\newcommand*{\cB}{\mathcal{B}}
\newcommand*{\cH}{\mathcal{H}}
\newcommand*{\tr}{\mathop{\mathrm{Tr}}\nolimits}
\newcommand{\bc}{\begin{center}}
\newcommand{\ec}{\end{center}}
\newtheorem{theorem}{Theorem}[section]
\newtheorem{lemma}[theorem]{Lemma}
\newtheorem{definition}[theorem]{Definition}
\newtheorem{corollary}[theorem]{Corollary}
\def\01{\{0,1\}}
\newcommand{\ket}[1]{|#1\rangle}
\newcommand{\bra}[1]{\langle#1|}
\newcommand{\proj}[1]{|#1\rangle\langle#1|}
\newcommand{\N}{\mathcal{N}}
\newcommand{\comment}[1]{}
\newcommand{\diag}{\mbox{diag}}
\newcommand*{\unitarygroup}{\mathsf{U}}
 \newcommand*{\symplecticgroup}{\mathsf{Sp}}
  \newcommand*{\orthogonalgroup}{\mathsf{O}}
    \newcommand*{\passivesymplecticgroup}{\mathsf{K}}
\newcommand{\eq}[1]{\begin{align}\begin{aligned}#1\end{aligned}\end{align}}
\newcommand{\eqtext}[1]{\qquad \text{#1} \qquad}
\newcommand{\mat}[1]{\begin{pmatrix}#1\end{pmatrix}}
\DeclareMathOperator*{\trace}{\mathrm{Tr} }
\DeclareMathOperator*{\E}{\mathbb E}
\begin{document}

\title{Typical entanglement for Gaussian states }

\author{Motohisa Fukuda \\
Faculty of Science, Yamagata University\\
1-4-12 Kojirakawa-machi, Yamagata-shi, 990-8560 Japan\\
\\
Robert Koenig \\ Institute for Advanced Study \& Zentrum Mathematik, Technical University of Munich\\
Boltzmannstr. 3, 85748 Garching, GERMANY}
\date{\today}
\maketitle
\begin{abstract}
We consider ensembles of bipartite states resulting from a random passive Gaussian unitary applied to a fiducial pure Gaussian state. We show that the symplectic spectra of the reduced density operators concentrate around that of a thermal state with the same energy. 
This implies, in particular, concentration of the entanglement entropy as well as other entropy measures. Our work extends earlier results on the typicality of entanglement beyond the two ensembles and the reduced purity measure considered in
[A. Serafini, O. Dahlsten, D. Gross, M. Plenio, J. Phys. A: Math. Theor. 40, 9551 (2007)].
\end{abstract}


\section{Introduction}
Describing quantum many-body systems is generally considered challenging because of the intrinsic complexity of general multipartite quantum states.  Nevertheless, making inferences about  ``typical'' properties of a quantum system becomes feasible if suitable notions of genericity are  considered.  A prime example is the entanglement entropy $S(\rho_A)=-\tr(\rho_A\log\rho_A)$ of a bipartite pure state $\ket{\Phi}_{AB}\in\cH_A\otimes\cH_B$. For finite-dimensional systems $\cH_A\cong\mathbb{C}^k$, $\cH_B\cong \mathbb{C}^{n-k}$, this quantity  ranges from $0$ to $\log k$ (assuming $n-k\geq k$) for general states $\ket{\Phi}$. However if $\ket{\Phi}$ is drawn according to the uniform measure on the unit sphere of $\mathbb{C}^k\otimes \mathbb{C}^{n-k}$, then~$S(\rho_A)$ is concentrated around its average. It is nearly maximal for $k\ll n$, showing that typical pure states are highly entangled.
 This kind of consideration has a long history: the expectation value of the typical entanglement was first considered by Lubkin~\cite{lubkin}, Pagels and Lloyd~\cite{pagelslloyd}, and Page~\cite{page93}. A formula for its value was rigorously proven by Foong and Kanno~\cite{foong}. In a seminal paper, Hayden, Leung and Winter~\cite{Hayden2004} systematically established the concentration around the average.
The latter work also provided the first quantum-information-theoretic applications of such typicality results, initiating a long series of paper investigating implications. For example, in~\cite{grossetal09}, it was shown that almost every state is useless for computation. 
Considerations of typical entanglement also play a key role in
our modern approach to quantum statistical mechanics, see e.g.,~\cite{popescuetal06}, 
and have been used to find phase transitions, see e.g.,~\cite{Facchi_etc}.

 The phenomenon of concentration  of entanglement is closely aligned with Jaynes' principle of maximum entropy~\cite{jaynesa,jaynesb}. The latter states that in the absence of additional information,   among all states (or distributions) consistent with given prior information, the state which maximizes the  von Neumann (respectively Shannon) entropy is distinguished. Indeed, this principle has been successfully applied in a variety of contexts involving limited prior information, ranging from hypothesis testing to statistical mechanics. Given this fact, it is natural to ask whether  Jaynes' principle can be placed on a rigorous footing in other settings where  knowledge about the system is limited. In particular, one may ask if  genericity statements can be made if only certain properties of the system such as the form of its Hamiltonian are known. Examples of this kind of study are~\cite{garneroneetal,Collins2013}, where typical entanglement in generic 1D spin chains is considered, as well as~\cite{hastingsmera}, which considers typical entanglement in random MERA-states, i.e., states restricted to belong to a certain subset of states defined by a certain variational ansatz. 
 
 Our work can be seen as an instance of this type of investigation: we are concerned with the generic features of ground states of quadratic bosonic Hamiltonians, that is, Gaussian quantum states. Such states are  ubiquitous in continuous-variable quantum information theory, and, in particular, quantum optics. The fact that Gaussian states and operations are most amenable to experimental realizations provides additional motivation for the study of typical random Gaussian states: as previously shown~\cite{serafinietalfidelity}, corresponding results have application e.g., to continuous-variable quantum teleportation.
 
 \subsubsection*{Typical entanglement in Gaussian states: prior work }
We emphasize that the study of typical entanglement in Gaussian quantum states is not new, but was pioneered by Serafini et al.~\cite{serafinidahlsten07}. In the following, we review and strengthen these results. We note that in related work~\cite{Lupo_etc}, an invariant measure on the manifold of multi-mode pure Gaussian states was studied, and the induced measure on the symplectic eigenvalues of reduced density operators was obtained. This measure is the result of acting on the vacuum state with Gaussian unitary operations. In contrast, we consider measures obtained from the orbits of general bipartite Gaussian states.   
 
Let us briefly summarize the work~\cite{serafinidahlsten07}. Analogous to the finite-dimensional case, one considers a bipartite system~$\cH_A\otimes\cH_B$ consisting of $k$~system modes  and $n-k$~environment modes (i.e., $\cH_A\cong L^2(\mathbb{R})^{\otimes k}$ and  $\cH_B\cong L^2(\mathbb{R})^{\otimes n-k}$). To formulate notions of typicality, the first issue at hand is that a normalizable Haar measure on the real symplectic group~$\symplecticgroup(2n)$ (associated with Gaussian unitaries on $\cH_A\otimes\cH_B$ via the metaplectic representation) does not exist because this group is not compact. As a consequence, the definition of a suitable ensemble of pure Gaussian state~$\ket{\Phi}\in\cH_A\otimes\cH_B$ requires introducing a compactness constraint. This usually takes the form of an upper bound
 \begin{align}
 \bra{\Phi}H_{AB}\ket{\Phi}\leq E \label{eq:energyconstraintbasic}
 \end{align}
 on the energy
 of $\ket{\Phi}$ with respect to the Hamiltonian
 \begin{align}
 H_{AB}&=\sum_{j=1}^n (Q_j^2+P_j^2)\ ,
 \end{align}
 where $(Q_1,\ldots,Q_n)$ and $(P_1,\ldots,P_n)$ are the canonical position- and momentum-operators on the system and environment, respectively (see Section~\ref{sec:gaussianpreliminaries} for details).
 
 The constraint~\eqref{eq:energyconstraintbasic} alone still does not determine a unique invariant measure over pure Gaussian states. But the covariance matrix of a Gaussian pure state satisfying~\eqref{eq:energyconstraintbasic} takes the form $OZO^T$ where 
 $O\in \symplecticgroup(2n)\cap \orthogonalgroup(2n)=:\passivesymplecticgroup(n)$ is an element of the symplectic orthogonal group $\passivesymplecticgroup(n)$ (associated with passive Gaussian unitaries), and \begin{align}
 \hat{Z}_n=
 Z_n\oplus Z_n^{-1}\ ,\qquad Z_n=\mathsf{diag}(z_1,\ldots,z_n) \label{eq:zndefrandomensemble}
 \end{align}
  is a diagonal matrix where the parameters $z_j\geq 1$ satisfy
   \begin{align}
 \frac{1}{2}\sum_{j=1}^n (z_j+z_j^{-1})\leq E\ .\label{eq:zjEcompactnessconstraint}
 \end{align}
 Indeed, the lhs.~is simply the energy $\bra{\Phi}H_{AB}\ket{\Phi}$ of $\ket{\Phi}$.
We note that our choice of parameterization differs from~\cite{serafinidahlsten07} 
by a square (i.e., they use $z_j^2$ instead of $z_j$), but this is  inconsequential.
Conversely, every pair $(O,Z)$ defines a Gaussian pure state~$\ket{\Phi}$ obeying the energy constraint~\eqref{eq:energyconstraintbasic}. Since there is a Haar measure on~$\passivesymplecticgroup(n)$, this means that any $n$-tuple of parameters $z=(z_1,\ldots,z_n)$ satisfying~\eqref{eq:zjEcompactnessconstraint} defines a probability measure~$\mu_z$ over the set of pure Gaussian states~$\ket{\Phi}$ satisfying~\eqref{eq:energyconstraintbasic}.  In fact, our work is concerned with the ensemble defined by the measure~$\mu_z$. To reiterate its definition in an operational manner: a random element of this ensemble is obtained by first preparing a pure state~$\ket{\Psi_z}$ with covariance matrix of the form~\eqref{eq:zndefrandomensemble}, and then applying a randomly chosen passive Gaussian unitary~$U_O$ corresponding to an element $O\in \passivesymplecticgroup(n)$ drawn according to the Haar measure. We note that $\ket{\Psi_z}=\ket{\varphi_{z_1}}\otimes\cdots\otimes \ket{\varphi_{z_n}}$, where each $\ket{\varphi_{z_j}}$ is a pure one-mode squeezed state with squeezing parameter~$z_j$.

  Serafini et al.~\cite{serafinidahlsten07} consider two ensembles over pure Gaussian states~$\ket{\Phi}$ satisfying~\eqref{eq:energyconstraintbasic}: they define a {\em microcanonical} measure $\mu_{\textrm{micro}}$ and {\em canonical} measure $\mu_{\textrm{canonical}}$. These can easily be expressed in terms of the distributions~$\{\mu_z\}_{z}$:
  \begin{description}
  \item[the microcanonical measure] $\mu_{\textrm{micro}}$ results by first
  drawing $(E_1,\ldots,E_n)$ uniformly (according to the measure induced by the Lebesgue measure on $\mathbb{R}^n$) 
  from the set~
  \begin{align}
  \Gamma_E=\{(E_1,\ldots,E_n)\ |\ E_j\geq 2\textrm{ for all }j=1\ldots,n\ \textrm{ and }\qquad \sum_{j=1}^n E_j \leq E\}\ ,
  \end{align}
  then setting $z_j=\frac{1}{2}(E_j+\sqrt{E_j^2-4})$ for $j=1,\ldots,n$ and drawing a pure state from~$\mu_z$. This choice of $z=(z_1,\ldots,z_n)$ amounts to distributing the available ``total'' energy $E$ uniformly over the $n$ modes in the initial product state $\ket{\Psi_z}=\ket{\varphi_{z_1}}\otimes\cdots\otimes \ket{\varphi_{z_n}}$. In particular, each state in this ensemble obeys the energy constraint~\eqref{eq:energyconstraintbasic}. 
  
   \item[the canonical measure] $\mu_{\textrm{canonical}}$
is the result of  drawing ${\bf E}=(E_1,\ldots,E_n)$ according to a 
Boltzmann distribution 
\begin{align}
dp({\bf E})&=\frac{e^{-(\sum_{j=1}^n E_j-2n)/T}}{T^n}dE_1\cdots dE_n\ 
\end{align}
with ``temperature'' $T$, where $T=E/n$, and again
setting $z_j=\frac{1}{2}(E_j+\sqrt{E_j^2-4})$ for $j=1,\ldots,n$ and drawing a pure state from~$\mu_z$. We note that states of this ensemble do not individually obey the energy constraint~\eqref{eq:energyconstraintbasic} in general, but their ensemble average does.
 \end{description} 
 
 Having defined the measures~$\mu_{\textrm{micro}}$ and  $\mu_{\textrm{canonical}}$ on the set of bipartite pure states, 
 Serafini et al.~\cite{serafinidahlsten07} compute the expected value and variance of the  inverse squared purity~$\tr(\rho_1^2)^{-2}$  of the $k=1$-mode reduced density operator $\rho_k=\tr_{n-k} \proj{\Phi}$. They analytically show that 
 the variance vanishes in the limit $n\rightarrow\infty$, implying concentration of this quantity around its average. We note that the reduced purity can be considered as a proper entanglement measure, and~\cite{serafinidahlsten07} also provide bounds relating it to the entanglement entropy $S(\rho_1)$. While these results are restricted to a single system mode, the authors of~\cite{serafinidahlsten07} also provide numerical evidence indicating that this concentration of measure also holds for $k>1$ modes.

 \subsubsection*{Our contribution} 
 In this work, we directly address concentration of measure (or more precisely, typical entanglement) for the distribution $\mu_z$ for any fixed choice of squeezing parameters $z=(z_1,\ldots,z_n)$.  This generalizes the results for the  microcanonical and the canonical measures as both of these are convolutions
\begin{align}
\mu_{\textrm{micro}}= \int  \mu_z d\nu_{\textrm{micro}}(z)\qquad\qquad 
\mu_{\textrm{canonical}}=\int  \mu_z d\nu_{\textrm{canonical}}(z)
\end{align}
 of $\mu$ with certain distributions $\nu_{\textrm{micro}}$ and $\nu_{\textrm{canonical}}$ on the set $\{(z_1,\ldots,z_n)\ |\ z_j\geq 1\textrm{ for all }j=1,\ldots,n\}$. Our analysis  also generalizes existing results in the following ways:
 \begin{enumerate}[(i)]
 \item
 We consider $k$-mode reduced density operators of pure Gaussian states of $n$~modes. Instead of considering a fixed number of modes (such as $k=1$), we allow $k$ to grow with $n$ at a polynomial rate (see Definition~\ref{definition:bounded-subsystem}).
 
 \item
The only assumption on the sequence~$z=(z_1,\ldots,z_n)$ of squeezing parameters we need is an upper bound on the maximal value $\max_{1\leq j\leq n} z_j$. We note that this corresponds to the amount of squeezing and is thus equivalent to the maximal energy concentrated in a single mode in the initial state~$\ket{\Psi_z}$.  In fact, we can allow this quantity to grow at a polynomial rate in the number of modes~$n$. In particular, with all but an exponentially small probability, a sequence $z$ drawn from the distribution $\mu_{\textrm{canonical}}$ will satisfy this constraint.

 \item
 Our analysis focuses on the symplectic eigenvalues of the covariance matrix defined by the $k$-mode reduced density operator~$\rho_k$.
 We show that these concentrate around the ``flat'' symplectic spectrum of a thermal state with the same energy. As the symplectic eigenvalues directly determine the entanglement spectrum (spectrum of the local state) of the state~$\ket{\Phi}$,  this immediately implies concentration of all (suitably continuous) entanglement measures. We exemplify this using the entanglement entropy.
 \end{enumerate}
Our approach shares some similarities, but also some key differences to that of~\cite{serafinidahlsten07}: Instead of considering all symplectic invariants, we compute certain  second and fourth moments of the (random) covariance matrix of the reduced state. Since this involves only integration over the Haar measure~$\passivesymplecticgroup(n)$ (which happens to be isomorphic to the unitary group~$\unitarygroup(n)$), we can derive explicit expressions for these moments using the Weingarten calculus. We then show that these expressions can be used to obtain our concentration results.

 \subsubsection*{Outline}
 In Section~\ref{sec:gaussianpreliminaries}, we review some basic notions related to Gaussian quantum states and operations.  Our main results are shown in Section~\ref{sec:mainresults}. In Section~\ref{sec:measuresbipartitepurestates}, we specify our probability measure and the main assumptions. In Section~\ref{sec:moments}, we compute the relevant moments.  Our main conclusions are derived subsequently: Theorem~\ref{theorem:thermal} provides explicit tail bounds on the deviation of the symplectic spectrum from that of a thermal state. Theorem~\ref{sec:thermalizationintermsofentropy} gives the corresponding statement for entanglement entropy.

\section{Basics on Gaussian states and operations\label{sec:gaussianpreliminaries}} 
We briefly review the pertinent facts about bosonic quantum systems, and refer to the literature for more details. Throughout, we consider a continuous-variable quantum system with $n$~canonical degrees of freedom (or {\em modes}). Such a system is described by the Hilbert space $\cH_n=L^2(\mathbb{R})^{\otimes n}$.
Let $R=(Q_1,\ldots,Q_n,P_1,\ldots,P_n)$ be the canonical position- and momentum operators. These satisfy the canonical commutation relations
\begin{align}
[R_j,R_k]=i J_{j,k} I_{\cH_n}\qquad\textrm{ for }\qquad j,k\in \{1,\ldots,2n\}\ ,
\end{align}
where $J=\begin{pmatrix}
0_n  & -I_n\\
I_n & 0_n
\end{pmatrix}$ is the matrix defining the symplectic form (here $I_n$ and $0_n$ are the $n\times n$ identity and zero matrix, respectively). 

\subsection{Gaussian states}
The set of {\em Gaussian states} is characterized by
having Gaussian characteristic functions; for our purposes, the main property we need is that a Gaussian state $\rho$ on $\cH_n$ is fully determined by its first and second moments
\begin{align}
d_r &=\tr(\rho R_r)\qquad\textrm{for }\qquad r=1,\ldots,2n\\
M_{r,s}&=\tr(\rho \{R_r-d_r \mathsf{id}_{\cH_n},R_s-d_s \mathsf{id}_{\cH_n}\})\qquad\textrm{for }\qquad r,s\in \{1,\ldots,2n\}\ .
\end{align}
Here $\{A,B\}=AB+BA$ denotes the anticommutator. The vector $d=(d_1,\ldots,d_{2n})$ is sometimes called the {\em displacement}, whereas $M=(M_{r,s})_{r,s=1}^{2n}$ is the {\em covariance matrix} of $\rho$. Since states with different displacement vectors (but identical covariance matrix) are related by tensor products of local (single)-mode unitary operations (so-called displacement or Weyl operators), the displacement plays no role in entanglement considerations. Therefore, we will restrict our attention to {\em centered} Gaussian states, i.e., states whose displacement vector vanishes.

Valid covariance matrices (i.e., those associated with a Gaussian quantum state) are those symmetric matrices~$M$ satisfying the uncertainty relation (operator inequality) $M+iJ\geq 0$. Any such  covariance matrix $M$ can be brought into {\em Williamson normal form}: there is an element $S\in \symplecticgroup(2n)$ of the real symplectic group $\symplecticgroup(2n)=\{S\in \mathsf{Mat}_{2n\times 2n}(\mathbb{R})\ |\ SJS^T=J\}$ such that 
\begin{align}
SMS^T=\mathsf{diag}(\lambda_1,\lambda_1,\lambda_2,\lambda_2,\ldots,\lambda_n,\lambda_n)\ .\label{eq:symplecticdiagonalization}
\end{align}
The parameters $\lambda_j$ satisfy $\lambda_j\geq 1$ and are called the {\em symplectic eigenvalues } of $\rho$. They can be obtained by computing the spectrum of the matrix~$JM$,  which consists of complex conjugate pairs as follows:
\begin{align}
\label{eq:random_ev}
\mathsf{spec}(JM)= \bigcup_{j=1}^n \{\pm i \lambda_j\}\ .
\end{align}

As discussed below, symplectic group elements correspond to unitary Gaussian operations on the Hilbert space. Therefore, Eq.~\eqref{eq:symplecticdiagonalization} implies that the symplectic eigenvalues fully determine the spectrum of the Gaussian state.  More precisely,
in a suitably chosen basis, the Gaussian state is a product state of single-mode thermal states with mean photon number
\begin{align}
\N(\lambda)=(\lambda-1)/2\ 
\end{align}
for each symplectic eigenvalue~$\lambda$. In particular, we may express unitarily invariant functions such as the von Neumann entropy $S(\rho)=-\tr(\rho\log\rho)$ of an $n$-mode Gaussian state $\rho$ as  functions of the symplectic eigenvalues: We have 
\begin{align}\label{eq:entropy}
S(\rho)=\sum_{j=1}^n G(\lambda_j)\qquad \qquad\textrm{ where }\qquad 
\begin{matrix}
G(\lambda)&=&g(\N(\lambda_j))\qquad \textrm{ and }\\
g(\N)&=&(\N+1)\log(\N+1)-\N\log \N\ . 
\end{matrix}
\end{align}
Slightly abusing notation, we will write
$S(M)$ for the von Neumann entropy $S(\rho_M)=-\tr(\rho_M\log \rho_M)$ of a Gaussian state $\rho_M$ with covariance matrix~$M$. 
 For example, the Gaussian state with covariance matrix $M=\lambda I_{2n}$ (with $\lambda\geq 1$) has entropy
\begin{align*}
S(\lambda I_{2n})&=n\cdot G(\lambda)\ .
\end{align*}
Such states (with covariance matrix proportional to the identity) are called thermal states; they are Gibbs states of the Hamiltonian 
\begin{align}
H_0 &=\sum_{j=1}^n (Q_j^2+P_j^2)\ .\label{eq:hzeroHamiltonian}
\end{align}
Note that for a general state~$\rho$ with covariance matrix $M$, we have $\tr(H_0\rho)=\frac{1}{2}\tr(M)$.

The covariance matrix $M$ of a pure $n$-mode Gaussian state $\rho=\proj{\Psi}$ can be diagonalized by using an orthogonal symplectic matrix $O\in \passivesymplecticgroup(n):=\symplecticgroup(2n)\cap \orthogonalgroup(2n)$ (where $\orthogonalgroup(2n)=\{O\in\mathsf{Mat}_{2n\times 2n}(\mathbb{R})\ |\ O^TO=OO^T=I_{2n}\}$), namely
\begin{align}
\label{eq:pure-formula}
M = O \begin{pmatrix}Z_n &0 \\ 0 & Z_n^{-1} \end{pmatrix} O^T\qquad\textrm{ where }\qquad 
Z_n = \mathrm{diag} (z_1, \ldots , z_n)\ \textrm{ with }z_j\geq 1\textrm{ for all }j=1,\ldots n\ .
\end{align}
We call the entries $\{z_j\}_{j=1}^n$ of the diagonal matrix $Z_n$ the {\em squeezing parameters}. 
Eq.~\eqref{eq:pure-formula} follows from the Euler/Bloch-Messiah decomposition of a general symplectic transformation
(this is nothing but the singular value decomposition of a symplectic transformation). Observe that such a state has energy
\begin{align}
\bra{\Psi}H_0\ket{\Psi}&=\sum_{j=1}^n E_j\qquad\textrm{where }\qquad E_j=\frac{1}{2}(z_j+1/z_j)\qquad\textrm{ for }j=1,\ldots,n\ .
\end{align}
We can understand $E_j$ as the energy in the $j$-th mode after a suitable Gaussian unitary rotation.

\subsection{Gaussian operations: The partial trace}
Gaussian operations are completely positive trace-preserving operations on $\cB(\cH_n)$ which preserve the set of Gaussian states. An example is  the partial trace 
\begin{align}
\begin{matrix}
\tr_{n-k}: &\cB(\cH_n)&\rightarrow & \cB(\cH_k)\\
 & \rho_n & \mapsto  & \rho_k:=\tr_{n-k}\rho_n
\end{matrix}
\end{align}
which maps an $n$-mode quantum state~$\rho_n$ to its reduced $k$-mode density operator $\rho_k$ on the first $k<n$ modes. The effect of this operation on Gaussian states can be described by its action on covariance matrices: if $\rho_n$ has the covariance matrix $M_n$, then the reduced density operator $\rho_k$ has covariance matrix~$M'_{n,k}=(M'_{n,k})_{r,s=1}^{2k}$ defined by the entries 
\begin{align}
(M'_{k})_{r,s}&=
\begin{cases}
(M_{n,k})_{r,s}\qquad&\textrm{ if }  r\leq k, s\leq k\\
(M_{n,k})_{r,n+(s-k)}\qquad&\textrm{ if }  r\leq k, s>k\\
(M_{n,k})_{n+(r-k),s}\qquad&\textrm{ if }  r>k, s\leq k\\
(M_{n,k})_{n+(r-k),n+(s-k)}\qquad&\textrm{ if }  r>k, s> k
\end{cases} 
\end{align}
for $r,s\in \{1,\ldots,2k\}$. In other words, the covariance matrix~$M'_{k}$ of the reduced density operator simply corresponds to taking the submatrix associated with the observables $\{Q_r\}_{r=1}^k\cup \{P_r\}_{r=1}^k$, which amounts to taking a submatrix of $M_{n}$. In the following, we will -- for simplicity of notation -- identify~$M_{n,k}$ with  the $2n\times 2n$-matrix obtained by 
keeping these entries of $M_{n}$ but setting all other entries to~$0$.
We can then write
\begin{align}
M'_{n,k} = \hat \Pi_{n,k} M_n \hat \Pi_{n,k}  \label{eq:partialtracecovariancematrix}
\end{align}
for the projection
\eq{
\hat \Pi_{n,k} = \mat{\Pi_{n,k}&0_n\\ 0_n& \Pi_{n,k}}\ ,
}
where $\Pi_{n,k}=\diag(\underbrace{1,\ldots,1}_{k\textrm{ times }},\underbrace{0,\ldots,0}_{n-k\textrm{ times}})$ is an $n \times n$ projection matrix of rank $k$. 

\subsection{Passive Gaussian unitary operations and the Haar measure}
 The set of Gaussian unitaries on $\cH_n$ (i.e., unitary maps  $U:\cH_n\rightarrow\cH_n$ defining  Gaussian operations $\Phi_U(\rho)=U\rho U^*$)  is in one-to-one correspondence with the real symplectic group: any $S\in \symplecticgroup(2n)$ defines a Gaussian unitary $U_S$ on $\cH_n$. The action of such a unitary~$U_S$ on a (centered) Gaussian state~$\rho$ is described by the map
\begin{align}
M\mapsto M'=SMS^T\ ,\label{eq:transformationcovariancematrices}
\end{align}
which maps the covariance matrix $M$ of $\rho$ to the covariance matrix $M'$ of the rotated state $\rho'=U_S\rho U_S^*$.

The group $\symplecticgroup(2n)$ is not compact, but the group $\passivesymplecticgroup(n):=\symplecticgroup(2n)\cap \orthogonalgroup(2n)$ of orthogonal symplectic matrices is. Gaussian unitaries $U_O$ on $\cH_n$ associated with elements $O\in \passivesymplecticgroup(n)$ are called {\em passive}. They have the property that they preserve the Hamiltonian~\eqref{eq:hzeroHamiltonian}, that is, $U_O^* H_0 U_O=H_0$. Passive Gaussian unitaries can be implemented using  phase shifters and beamsplitters only as shown in~\cite{recketal}. 

Contrary to the set of general Gaussian unitaries, it is possible to define an invariant measure on the set of passive Gaussian unitaries. The construction uses the fact that the group $\passivesymplecticgroup(n)$ is isomorphic to the (complex) unitary group $\unitarygroup(n)=\{U\in \mathbb{C}^{n\times n}\ |\ U^* U=I_n\}$, with isomorphism given by
\begin{align}
\eta: \unitarygroup(n) & \rightarrow &&\passivesymplecticgroup(n)=\symplecticgroup(2n)\cap \orthogonalgroup(2n)\\
U & \mapsto &&\begin{pmatrix} \mathsf{Re}(U) & \mathsf{Im}(U)\\ -\mathsf{Im}(U) & \mathsf{Re}(U) \end{pmatrix} 
\equiv 
F  \begin{pmatrix}  U & 0\\ 0 & \overline{U} \end{pmatrix}F^{-1}\qquad\textrm{ where }
\qquad F=\frac{1}{\sqrt{2}} \begin{pmatrix} I_{n} & iI_{n} \\ iI_{n} & I_{n} \end{pmatrix}.
\label{eq:isomorphismtosymplecticgroup}
\end{align}
We refer to e.g.,~\cite{degosson} for more information about this isomorphism and the symplectic group in general. The isomorphism~\eqref{eq:isomorphismtosymplecticgroup} immediately gives rise to the notion of a Haar measure of $\passivesymplecticgroup(n)$: indeed, the Haar measure on the  unitary group $\unitarygroup(n)$ induces a measure over $\passivesymplecticgroup(n)$ (which is obviously left- and right-invariant). This in turn defines a measure on the set of passive Gaussian unitaries (via $O\mapsto U_O$), and it is this latter measure we use to define ensembles of pure Gaussian states.

\section{Concentration of measure for pure Gaussian states\label{sec:mainresults}}
Here we establish our main results. In Section~\ref{sec:measuresbipartitepurestates}, we formally introduce the distributions we consider, and provide the main definitions. In Section~\ref{sec:moments}, we discuss
how to compute certain moments of the (random) covariance matrix of the reduced density operators. 
In Section~\ref{sec:typicalitysymplectic}, we derive our main concentration result for the symplectic spectra of these operators. Finally, in Section~\ref{sec:entropyreduceddensity}, we show how these imply concentration results for the von Neumann entropy.
\subsection{Ensembles of bipartite pure Gaussian states and their description\label{sec:measuresbipartitepurestates}}
We now formally define the measure over pure Gaussian states on~$\cH_n$ we consider, and express it in terms of covariance matrices. Fix a sequence  $(z_1,\ldots,z_n)\in [1,\infty)^n$ of the squeezing parameters and let $Z_n=\mathsf{diag}(z_1,\ldots,z_n)$. Let $\ket{\Psi_{Z_n}}\in\cH_n$ be the $n$-mode Gaussian state with  covariance matrix~$Z_n\oplus Z_n^{-1}\in\mathsf{Mat}_{2n\times 2n}(\mathbb{R})$, i.e., $\ket{\Psi_{Z_n}}$ is a tensor product of single-mode squeezed states.
 The distribution~$\mu_{Z_n}$ over pure Gaussian states then is obtained by
\begin{enumerate}[(i)]
\item
drawing an element $U\in \unitarygroup(n)$ at random (from the Haar measure) and
\item
outputting the state 
\begin{align}
\ket{\Psi(U)}=U_{\eta(U)}\ket{\Psi_{Z_n}}\label{eq:PsiUdef}
\end{align} obtained by applying
the passive Gaussian unitary~$U_{\eta(U)}$ associated with the element $\eta(U)\in \passivesymplecticgroup(n)$.
\end{enumerate}
In other words, the Haar measure on $\unitarygroup(n)$ induces a measure~$\mu_{Z_n}$ over pure Gaussian states via Eq.~\eqref{eq:PsiUdef}. We will be interested in the partial traces $\tr_{n-k}\proj{\Psi(U)}$ of these states. 

Expressed in terms of covariance matrices, the state
$\ket{\Psi(U)}$ has covariance matrix $M_{n,k}=M_{n,k}(U)$ given by (cf.~\eqref{eq:transformationcovariancematrices}).
\begin{align}\label{eq:rotated-M}
M_{n}(U)=\eta(U) (Z_n\oplus Z_n^{-1})\eta(U)^T\ .
\end{align}
In particular, the reduced density operator~$\tr_{n-k}\proj{\Psi(U)}$ has covariance matrix (cf.~\eqref{eq:partialtracecovariancematrix})
\begin{align}
M_{n,k}(U)=\hat{\Pi}_{n,k}M_n(U)\hat{\Pi}_{n,k}\ .\label{eq:mnkudef}
\end{align}
We are interested in the symplectic eigenvalues of $M_{n,k}(U)$ (for a typical choice of $U$), hence we will consider (cf.~\eqref{eq:random_ev}) the random matrix
\begin{align}
JM_{n,k}(U)&=\hat{\Pi}_{n,k}J M_n(U) \hat{\Pi}_{n,k}=\hat{\Pi}_{n,k}J \eta(U) (Z_n\oplus Z_n^{-1})\eta(U)^T \hat{\Pi}_{n,k}\ . \label{eq:jmnkudef}
\end{align}
where $U \in \unitarygroup(n)$ is chosen with respect to Haar measure on the unitary group. (Here we used the particular block-diagonal form of $\hat{\Pi}_{n,k}$ in the first identity.)

\subsubsection*{Sequences of distributions and reduced density operators}
In the following, we will be interested in asymptotic properties 
of the reduced density operators as a function of the number of modes~$n$. Correspondingly, we  consider sequences of $n$-tuples of squeezing parameters  (for different $n$) associated with $n$-mode pure product squeezed states $\ket{\Psi_{Z_n}}\in\cH_n$. The following definition will be convenient:
\begin{definition}\label{definition:bounded-squeeze}
Let $\{Z_n\}_{n\in\mathbb{N}}$ be a sequence of matrices of the form
\begin{align}
Z_n=\mathsf{diag}(z_1^{(n)},\ldots,z_{n}^{(n)})\qquad\textrm{ with }\qquad z_j^{(n)}\geq 1\qquad\textrm{ for all }1\leq j\leq n\textrm{ and }n\in\mathbb{N}\ .
\end{align}
We say that the sequence $\{Z_n\}_{n\in\mathbb{N}}$ has bounded squeezing of degree $\zeta\geq 0$ if there is a constant $C>0$ such that 
\begin{align}
\|Z_n \|_\infty \leq C n^\zeta\qquad\textrm{ for all large enough }n\in\mathbb{N}\ .\label{eq:bounded-squeeze}
\end{align}
\end{definition}
With $\zeta=0$, this definition includes the physically most relevant case where each single mode has a bounded amount of energy (which is  equivalent to squeezing for one-mode squeezed states). However, it also permits considerations of scenarios where the maximal amount of squeezing per mode can grow with the number of modes.

A sequence $\{Z_n\}_{n\in\mathbb{N}}$ of such squeezing parameters 
defines a sequence of distributions $\{\mu_{Z_n}\}_{n\in\mathbb{N}}$ over  $n$-mode pure Gaussian states, and we can consider their $k$-mode reduced density matrices. Again, we may let $k=k_n$ depend on $n$. Let us call the number of modes $k_n\in \{1,\ldots,n\}$ the ``subsystem size''. We use the following definition:
\begin{definition}\label{definition:bounded-subsystem}
A sequence $\{k_n\}_{n\in\mathbb{N}}$ of subsystem sizes is called {\em bounded of degree~$\kappa$} if there is a constant $K>0$ such that 
\begin{align}
k_n\leq  Kn^\kappa\qquad\textrm{ for all  large enough } n\in \mathbb{N}\ .
\end{align}
\end{definition}
Similarly as for squeezing, this includes the case (with $\kappa=0$) where 
the subsystem size is held constant. However, our derivation also permits some degree of growth with the total number of modes~$n$.

\subsection{Moments of typical Gaussian pure states\label{sec:moments}}
Here we analyze typical random Gaussian pure states. Our main result concerns typicality of 
the~$k_n$-mode reduced density operators of pure bipartite states of $n$~modes, distributed according to the distribution $\mu_{Z_n}$. More precisely, we investigate the asymptotic behavior of the symplectic eigenvalues~$\{\lambda^{(n)}_j\}_{j=1}^{k_n}$  of the  covariance matrices $M_{n,k}$ (cf.~Eq.~\eqref{eq:mnkudef}). We assume here  that $\{Z_n\}_{n\in\mathbb{N}}$ has bounded squeezing of degree~$\zeta\geq 0$ (see Def.~\ref{definition:bounded-squeeze}), and the sequence~$\{k_n\}_{n\in\mathbb{N}}$ of subsystem sizes is bounded of degree~$\kappa$ (as in Def.~\ref{definition:bounded-subsystem}), for suitably chosen $\zeta\geq 0$ and $\kappa\geq 0$.  We will show  (see Theorem~\ref{theorem:thermal}) that all the symplectic eigenvalues $\{\lambda^{(n)}_j\}_{j=1}^{k_n}$ are typically close to the average energy~$\bra{\Psi_{Z_n}}H_0\ket{\Psi_{Z_n}}$ of the state~$\ket{\Psi_{Z_n}}$ (see Section~\ref{sec:measuresbipartitepurestates}), that is,   to 
\eq{\label{eq:agerage-energy}
\lambda(n)= \frac{\tr \hat Z_n}{2n}=\frac{1}{2n}\sum_{j=1}^n (z^{(n)}_j+1/z^{(n)}_{j})\ .
}
as $n \to \infty$, assuming that $\lambda(n)$ is uniformly bounded in $n$.

In preparation for this result, we compute certain moments of the (random) covariance matrix~$M_{n,k}$. We begin with the average of~$( JM_{n,k} )^2$:
\begin{theorem}\label{theorem:average}
Let $\{Z_n\}_{n\in\mathbb{N}}$ be a sequence of matrices with bounded squeezing of degree~$0\leq \zeta<1$ as  in Definition \ref{definition:bounded-squeeze}. Assume that  average energy \eqref{eq:agerage-energy} is uniformly bounded with respect to $n$. Let $\{k_n\}_{n\in\mathbb{N}}$ be a sequence of subsystem sizes bounded of degree $0 \leq \kappa <1$ as in Definition \ref{definition:bounded-subsystem}.
Then we have 
\begin{align}
\E (JM_{n,k})^2 = \left( - \lambda(n)^2 +O(n^{\kappa+\zeta-1}) \right) I_{2k} \label{eq:jmnksquare}
\end{align}
where $M_{n,k}=M_{n,k}(U)$ (cf.~\eqref{eq:mnkudef}) and where the expectation is taken over $U$ chosen uniformly from the Haar measure on $\unitarygroup(n)$. 
\end{theorem}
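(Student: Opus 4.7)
The plan is to push the problem through the isomorphism $\eta$ of Eq.~\eqref{eq:isomorphismtosymplecticgroup} and reduce it to a second-order Haar-moment calculation on $\unitarygroup(n)$. Since $\eta(U)$ is both symplectic and orthogonal, $J$ and $\eta(U)$ commute, so
\begin{align*}
JM_{n,k}(U) = \hat\Pi_{n,k}\,\eta(U)\,B\,\eta(U)^T\,\hat\Pi_{n,k},\qquad B := J(Z_n\oplus Z_n^{-1}).
\end{align*}
Conjugating everything by $F$ leaves $\hat\Pi_{n,k}$ invariant (its two diagonal blocks coincide, so $F^{-1}\hat\Pi_{n,k}F=\hat\Pi_{n,k}$), sends $\eta(U)$ to $U\oplus\bar U$, and gives
\begin{align*}
F^{-1}BF = \begin{pmatrix}-iW_+ & W_-\\ -W_- & iW_+\end{pmatrix},\qquad W_\pm := \tfrac{1}{2}(Z_n \pm Z_n^{-1}).
\end{align*}
Squaring and sandwiching by $\hat\Pi_{n,k}$ in this basis produces a $2\times 2$ block matrix whose entries are explicit polynomials in $U,U^*,U^T,\bar U$ conjugated by $\Pi := \Pi_{n,k}$ and $W_\pm$.

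The off-diagonal blocks contain three $U$'s and one $\bar U$ (or vice versa) and vanish in expectation by the $U\mapsto e^{i\theta}U$ invariance of the Haar measure. The $(1,1)$ block reduces in expectation to
\begin{align*}
-\Pi\,\E[UW_+U^*\Pi UW_+U^*]\,\Pi \;-\; \Pi\,\E[UW_-U^T\Pi\bar U W_-U^*]\,\Pi,
\end{align*}
and the $(2,2)$ block is its complex conjugate (hence numerically equal, since the result is real). Both averages are second-order Haar moments evaluable by the unitary Weingarten formula; for the ``$UAU^*\cdots UCU^*$'' pattern one finds
\begin{align*}
\E[UAU^*\Pi UCU^*] = \tfrac{1}{n^2-1}\bigl[\tr(A)\tr(C)\Pi + k\tr(AC)I\bigr] - \tfrac{1}{n(n^2-1)}\bigl[\tr(AC)\Pi + k\tr(A)\tr(C)I\bigr],
\end{align*}
whereas the symmetric ``$UAU^T\cdots\bar U C U^*$'' pattern (applicable to the $W_-$ term because $W_\pm$ are real symmetric) produces only $\tr(AC)$-type contractions without a $\tr(A)\tr(C)$ leading contribution.

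Substituting $A = C = W_+$ and using $\tr W_+ = n\lambda(n)$ in the first integral gives $\lambda(n)^2\Pi$ as the leading term after sandwiching by $\Pi$. The remaining corrections are controlled by $k = O(n^\kappa)$ together with $\tr(W_\pm^2) \le \|W_\pm\|_\infty \tr W_\pm = O(n^\zeta)\cdot n\lambda(n) = O(n^{1+\zeta})$, using the bounded squeezing and the uniform boundedness of $\lambda(n)$; the resulting sub-leading terms are of size $O(k\tr(W_\pm^2)/n^2) = O(n^{\kappa+\zeta-1})$. Hence each diagonal block equals $(-\lambda(n)^2 + O(n^{\kappa+\zeta-1}))\Pi$; conjugating back by $F$ preserves $\hat\Pi_{n,k}$, and restricting to its $2k$-dimensional range yields the claimed identity. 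The main technical hurdle is the careful sign and index bookkeeping across the four distinct Haar integrals---especially the two non-standard patterns featuring $U^T$ and $\bar U$ arising from the off-diagonal entries $W_-$ of $F^{-1}BF$, where the Weingarten contractions pair transposes rather than conjugates---to confirm that the $\lambda(n)^2$ contributions appear with the correct sign in both diagonal blocks while the $W_-$-contributions are genuinely subleading.
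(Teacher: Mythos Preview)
Your approach is essentially the same as the paper's: both pass through the isomorphism $\eta$ (equivalently, conjugate by $F$), reduce the computation of $\E[(JM_{n,k})^2]$ to two second-order Weingarten integrals---one of the form $\E[UW_+U^*\Pi UW_+U^*]$ and one of the form $\E[UW_-U^T\Pi\bar U W_-U^*]$---identify the leading $(\tr W_+)^2/n^2=\lambda(n)^2$ contribution from the first, and bound the remaining terms by $O(n^{\kappa+\zeta-1})$ using $\tr(W_\pm^2)=O(n^{1+\zeta})$. Your $W_+,W_-$ are exactly the paper's $B,A$; the paper organizes the block algebra slightly differently (it keeps $\hat\Pi JF$ and $F\hat\Pi$ as outer factors rather than conjugating the whole expression by $F$), but the substance is identical.

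One small slip: the $(2,1)$ entry of $F^{-1}J(Z_n\oplus Z_n^{-1})F$ is $+W_-$, not $-W_-$. Consequently the $(1,1)$ block of the square should read $-\Pi\,\E[UW_+U^*\Pi UW_+U^*]\,\Pi\;+\;\Pi\,\E[UW_-U^T\Pi\bar U W_-U^*]\,\Pi$, with a plus sign on the second term (this is the paper's $-\Pi X\Pi$ with $X=\E[UBU^*\Pi UBU^*-UAU^T\Pi\bar UAU^*]$). Since, as you correctly observe, the $W_-$ integral only produces $\tr(W_-^2)$-type contractions and is therefore subleading, this sign does not affect the stated $O(n^{\kappa+\zeta-1})$ error, but it is worth correcting for the bookkeeping you flag at the end.
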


\begin{proof}
Recall from~\eqref{eq:jmnkudef} that 
\eq{
JM_{n,k} &=\hat \Pi_{n,k} J \eta(U) \hat{Z}_n \eta(U)^T \hat \Pi_{n,k} 
=\frac 12  \mat{i\Pi &\Pi \\-\Pi &-i\Pi }\mat{UAU^T&-iUBU^*\\-i\bar U BU^T & -\bar UAU^*}\mat{\Pi &i\Pi \\i\Pi &\Pi }
}
where
\eq{
A = \frac {Z_n-Z_n^{-1}}{2} \eqtext{and}  B= \frac {Z_n+Z_n^{-1}}{2}\ .
}
Here we inserted the particular form~\eqref{eq:isomorphismtosymplecticgroup} of the isomorphism~$\eta$. 
Squaring this matrix then gives
\eq{
(JM_{n,k})^2 = \frac 12 \mat{i\Pi &\Pi \\-\Pi &-i\Pi } \Theta    \mat{\Pi &i\Pi \\i\Pi &\Pi }
}
Here, $\Theta=\Theta(U) $ is a random $2n\times 2n$ matrix, which is the random part of $(JM_{n,k})^2$. It is given by
\eq{
\Theta  = \mat{iUBU^*\Pi UAU^T-iUAU^T\Pi \bar UBU^T&UBU^*\Pi UBU^*-UAU^T\Pi \bar UAU^*\\
\bar UAU^*\Pi UAU^T-\bar UBU^T\Pi \bar UBU^T&-i\bar UAU^*\Pi UBU^*+i\bar UBU^T\Pi \bar UAU^*}\ .
}
As the entries are polynomials in $U, U^*, U^T, \bar{U}$, they can easily be computed using the Weingarten calculus (see Section~\ref{sec:weingarten}). In particular, it is easy to see that when one takes the average over the unitary group, the diagonal blocks of  $\Theta $ vanish such that 
\eq{
\E [\Theta]  = \mat{0& X\\-\bar X&0} = \mat{0& X\\- X&0}
}
where
\eq{
X = \E \left[UBU^*\Pi UBU^*-UAU^T\Pi \bar UAU^*\right]\ , 
}
which is a real $n\times n$ matrix.
Therefore, reinserting this into~\eqref{eq:jmnksquare} gives 
\eq{
\E \left[(JM_{n,k})^2\right]= - \mat{\Pi X\Pi &0\\0&\Pi X\Pi }
}
Computing the matrix $X$ using the Weingarten calculus  (see Lemma \ref{lemma:calculation-average}) we get
\eq{
\E [(JM_{n,k})^2 ] = -\tilde \lambda^2 \mat{\Pi &0\\0&\Pi } = -\tilde\lambda^2 \hat \Pi 
}
where 
\begin{align}
\tilde \lambda^2 &= \frac{1}{n^2-1} \left[ (\tr B)^2 - \tr A^2 + \tr \Pi  \cdot (\tr B^2 -\tr A^2)\right]\\
& \hspace{2cm}+ \frac{1}{n(n^2-1)} \left[\tr A^2 -\tr B^2 +  \tr \Pi  \cdot (\tr A^2 - (\tr B)^2) \right] \\
&= \left(\frac{n-k}{n(n^2-1)}\right)(\tr B)^2-\left(\frac{k+1}{n(n+1)}\right)\tr(A^2)+\left(\frac{kn-1}{n(n^2-1)}\right)\tr(B^2)\ .\label{eq:secondmomentpolynomial}
\end{align}
Let us analyze asymptotic behavior of the quantity~$\tilde{\lambda}^2$. 
Using the assumption that
$k=k_n=O(n^\kappa)$, we have
\begin{align}
\tilde \lambda^2 &=\frac{1}{n^2}(1+O(n^{\kappa-1})) (\tr B)^2+O(n^{\kappa-2})\tr(A^2)+O(n^{\kappa-2})\tr(B^2)
\end{align}
Note that we have 
\eq{
 \trace [A^2] \leq  \trace [B^2] \leq \| B \|_1 \|B\|_\infty \leq C\lambda(n)\cdot  n^{1+ \zeta}
}
because 
\begin{align}
\|B\|_1 = \trace B = \lambda(n)\cdot n\label{eq:boneinsert}
\end{align} and $\|B\|_\infty \leq C n^\zeta$ for some constant $C>0$ by assumption.  
Since $\lambda(n)$ is uniformly bounded in $n$,  we obtain
\eq{
\tilde \lambda ^2 =
\lambda(n)^2+O(n^{\kappa+\zeta-1})
}
by straightforward computation.  This is the claim.
\end{proof}
In the following, we will only need the following immediate corollary:
\begin{corollary}[Second moment]\label{cor:secondmoments}
Let $\{Z_n\}_{n\in\mathbb{N}}$ and $\{k_n\}_{n\in\mathbb{N}}$ be as in Theorem~\ref{theorem:average}. Then 
\begin{align}
\E\left[\tr((JM_{n,k})^2)\right]&=-2k_n \lambda(n)^2+O(n^{2\kappa+\zeta-1})\ .
\end{align}
\end{corollary}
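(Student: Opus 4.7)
The plan is to derive the statement directly from Theorem~\ref{theorem:average} by taking the trace of both sides of~\eqref{eq:jmnksquare}. First, I would exchange expectation and trace using linearity, obtaining
\eq{
\E\left[\tr((JM_{n,k})^2)\right] = \tr\left(\E\left[(JM_{n,k})^2\right]\right).
}
Second, I would plug in the identity $\E(JM_{n,k})^2 = (-\lambda(n)^2 + O(n^{\kappa+\zeta-1})) I_{2k}$ from Theorem~\ref{theorem:average}.

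Third, I would use $\tr(I_{2k}) = 2k_n$ to pull out the scalar prefactor, yielding
\eq{
\E\left[\tr((JM_{n,k})^2)\right] = 2k_n \cdot \left(-\lambda(n)^2 + O(n^{\kappa+\zeta-1})\right) = -2k_n\lambda(n)^2 + O\bigl(k_n \cdot n^{\kappa+\zeta-1}\bigr).
}
Finally, I would invoke the bounded-subsystem assumption $k_n = O(n^\kappa)$ from Definition~\ref{definition:bounded-subsystem} to absorb the factor $k_n$ into the Big-O term, producing the claimed error of order $O(n^{2\kappa+\zeta-1})$.

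There is no genuine obstacle here: the entire content is in Theorem~\ref{theorem:average}, and the corollary is merely the trace of its conclusion. The only bookkeeping point to be careful about is that the implicit constant in the Big-O depends only on $\zeta$, $\kappa$, the constants $C$ and $K$ from Definitions~\ref{definition:bounded-squeeze} and~\ref{definition:bounded-subsystem}, and the uniform bound on $\lambda(n)$, so that the final estimate is genuinely a statement of order $O(n^{2\kappa+\zeta-1})$ as $n \to \infty$.
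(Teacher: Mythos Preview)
Your proposal is correct and matches the paper's approach: the paper states this as an ``immediate corollary'' of Theorem~\ref{theorem:average}, and your derivation---taking the trace of~\eqref{eq:jmnksquare}, using $\tr I_{2k}=2k_n$, and absorbing $k_n=O(n^\kappa)$ into the error term---is exactly how one obtains it.
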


We will also need an analogous statement for the fourth moment. This can be derived in a similar manner although with more effort: Explicit expressions can be obtained from the graphical Weingarten calculus.
The asymptotics of these expressions may then be bounded using the following estimate: we have
\begin{align}
\trace \left[\prod_{i=1}^m  X_i \right] 
\leq \left\| X_1 \right\|_1  \left\| \prod_{i=2}^m  X_i  \right\|_\infty
\leq \left\| B \right\|_1  \left\| B \right\|_\infty^{m-1} 
\leq \lambda(n) C n^{1+ (m-1)\zeta}\label{eq:asymptoticboundv}
\end{align}
where $X_i\in \{A,B\}$ and where we inserted the expression~\eqref{eq:boneinsert}. We defer this computation to Appendix~\ref{appendix:moments}. Here we only state the result we will need:

\begin{lemma}\label{lem:fourthmomentsexplicit}
Let $\{Z_n\}_{n\in\mathbb{N}}$ and $\{k_n\}_{n\in\mathbb{N}}$ be as in Theorem~\ref{theorem:average}. Then 
\begin{align}
\E [\tr\left((JM_{n,k})^4\right) ]=2k_n \lambda(n)^4 + O(n^{2\kappa+\zeta-1})
\end{align}
\end{lemma}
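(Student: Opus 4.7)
}
The plan is to follow the same strategy as in the proof of Theorem~\ref{theorem:average}, but carried out to the fourth power. Starting from the explicit factorization
\eq{
JM_{n,k} = \frac{1}{2}\mat{i\Pi & \Pi \\ -\Pi & -i\Pi}\mat{UAU^T & -iUBU^* \\ -i\bar U BU^T & -\bar U A U^*}\mat{\Pi & i\Pi \\ i\Pi & \Pi}\ ,
}
I would first compute $(JM_{n,k})^4$ by multiplying these block matrices, noting that the outer $2\times 2$ matrices telescope when composed, yielding a single block-matrix of the form $\frac{1}{2}\mat{i\Pi & \Pi \\ -\Pi & -i\Pi}\Theta^{(4)}\mat{\Pi & i\Pi \\ i\Pi & \Pi}$ where $\Theta^{(4)}$ is a degree-eight monomial in $U, U^T, \bar{U}, U^*$ whose blocks are sums of products of four factors from $\{A,B\}$, interlaced with three internal copies of $\Pi$. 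Taking the trace then gives a sum of terms of the form
\eq{
\tr\!\left[\Pi\, U X_1 U^{\dagger_1} \Pi\, U X_2 U^{\dagger_2} \Pi\, U X_3 U^{\dagger_3} \Pi\, U X_4 U^{\dagger_4}\right]
}
with $X_i\in\{A,B\}$ and each $\dagger_i$ one of $T,*$, plus appropriate complex-conjugate copies.

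I would then apply the unitary Weingarten formula term-by-term: for a polynomial in $U$ and $\bar U$ of degree $(4,4)$,
\eq{
\E\!\left[U_{i_1 j_1}\cdots U_{i_4 j_4}\bar U_{k_1 l_1}\cdots \bar U_{k_4 l_4}\right]=\sum_{\sigma,\tau\in S_4}\!\mathsf{Wg}(\sigma\tau^{-1},n)\prod_{r=1}^{4}\delta_{i_r k_{\sigma(r)}}\delta_{j_r l_{\tau(r)}}.
}
This yields a (graphical) sum over pairs of permutations $(\sigma,\tau)\in S_4\times S_4$, each contributing a product of traces of monomials in $A, B$ and $\Pi$, multiplied by $\mathsf{Wg}(\sigma\tau^{-1},n)$. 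Using the standard asymptotics $\mathsf{Wg}(\pi,n)=O(n^{-4-|\pi|})$ where $|\pi|$ is the minimal number of transpositions needed to express $\pi$, together with the submultiplicative bound~\eqref{eq:asymptoticboundv}, each Weingarten term is at most $\lambda(n)^c \cdot n^{-4-|\sigma\tau^{-1}|}\cdot n^{(\text{\# cycles})(1+O(\zeta))}$. The leading $n^0$ behaviour arises precisely from the planar (non-crossing) pairings in which each $\Pi$ factor contributes a separate cycle of length one (and hence $\tr\Pi=k_n$), with all four factors paired to yield $(\tr B)^4/n^4$ at leading order, while all non-planar pairings acquire a Weingarten suppression of order at least~$n^{-2}$.

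The main obstacle is therefore the careful bookkeeping of which $(\sigma,\tau)$ pairs give the leading $n^0$ contribution and how the cancellations combine the traces of $A^2,B^2$ and $(\tr B)^2$ into a single coefficient $\lambda(n)^4$. I expect, in analogy with~\eqref{eq:secondmomentpolynomial}, to arrive at a polynomial in $1/n$ of the schematic form
\eq{
\E\!\left[\tr((JM_{n,k})^4)\right] = 2k_n\left(\frac{\tr B}{n}\right)^{\!4}\bigl(1+O(n^{\kappa-1})\bigr)+O\!\left(k_n\, n^{-2}\cdot \text{(mixed traces of $A,B$)}\right)\ ,
}
where the factor $2k_n$ comes from $\tr\hat\Pi_{n,k}=2k_n$ appearing after the outer matrices collapse. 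Substituting $\tr B/n=\lambda(n)$ and applying the assumption that $\lambda(n)$ is uniformly bounded, together with the bound~\eqref{eq:asymptoticboundv} applied to each subleading monomial (which contributes at most $\lambda(n)\cdot n^{1+3\zeta}$), produces the desired estimate
\eq{
\E\!\left[\tr((JM_{n,k})^4)\right]=2k_n\lambda(n)^4+O(n^{2\kappa+\zeta-1}).
}
The bulk of the appendix work is verifying that every non-leading $(\sigma,\tau)$ contribution fits within the claimed error bound; this is routine but tedious, which is consistent with the authors' decision to defer the computation.
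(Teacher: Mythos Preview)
Your proposal is correct and follows essentially the same approach as the paper: expand $(JM_{n,k})^4$ via the block factorization, apply the Weingarten formula for degree-$(4,4)$ polynomials in $U,\bar U$, isolate the leading $(\tr B)^4/n^4$ contribution giving $2k_n\lambda(n)^4$, and bound all remaining monomials in $A,B$ using~\eqref{eq:asymptoticboundv}. The only practical difference is that the paper carries out the Weingarten integration symbolically with the \textsf{RTNI} package to produce an explicit table of all twelve monomials $m(A,B)$ and their exact rational-function coefficients $\alpha_m(n,k)$, then reads off the leading orders line by line; your hand-computation plan would reproduce the same table, and your caveat that the main labor is the term-by-term bookkeeping is exactly right (note in particular that the dominant subleading error $O(n^{2\kappa+\zeta-1})$ arises from terms like $(\tr B)^2\tr(B^2)$ with coefficient $\sim k^2/n^4$, not $k/n^2$ as your schematic suggests).
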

We will henceforth restrict our attention to the regime $2\kappa+\zeta \leq 1$: here both Theorem~\ref{theorem:average} 
(respectively Corollary~\ref{cor:secondmoments}) and Lemma~\ref{lem:fourthmomentsexplicit} can be applied.

\subsection{Typical behavior of symplectic eigenvalues in reduced density operators\label{sec:typicalitysymplectic}}
We show concentration of symplectic eigenvalues in the setting of Theorem \ref{theorem:average}, i.e.,  we assume that we have 
\begin{enumerate}[(i)]
\item a sequence of diagonal matrices $\{Z_n\}_{n\in\mathbb{N}}$ with bounded squeezing of degree $0 \leq \zeta<1$, as in Definition,  \ref{definition:bounded-squeeze}
\item an average energy~$\lambda(n)$ (cf.~\eqref{eq:agerage-energy}) uniformly bounded as $n \to \infty$, and 
\item a sequence~$\{k_n\}_{n\in\mathbb{N}}$  of subsystem sizes which is bounded of degree $0 \leq \kappa <1$, as in Definition~\ref{definition:bounded-subsystem}.
\end{enumerate}
To study the symplectic spectrum of the matrices $M_{n,k}(U)$, we are interested in the function
\eq{\label{equation:function}
\begin{matrix}
f: &\unitarygroup(n)&\rightarrow &\mathbb{R}\\
&U&\mapsto &f(U):= \trace\left[ \left((JM_{n,k}(U))^2 + \lambda(n)^2 I_{2k}  \right)^2 \right]\ .
\end{matrix}
}
The point here is that $f(U)$ directly quantifies the difference between the symplectic spectrum of $M_{n,k}(U)$ and
a ``flat'' spectrum with all symplectic eigenvalues equal to the average energy~$\lambda(n)$. This is expressed by the following lemma: 
\begin{lemma}\label{lemma:f-formula}
Let $f:\unitarygroup(n)\rightarrow\mathbb{R}$ be defined by~\eqref{equation:function}, let $\lambda(n)$ denote the average energy~\eqref{eq:agerage-energy} and let $\{\lambda_j(U)\}_{j=1}^k$ denote the symplectic eigenvalues of $M_{n,k}(U)$. Then 
\eq{
f(U) = 2 \sum_{j=1}^k \left(\lambda_j(U)^2 - \lambda(n)^2\right)^2 \ .
}
\end{lemma}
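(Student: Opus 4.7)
The proof is a direct spectral computation, once one identifies the eigenvalues of $JM_{n,k}(U)$. I would begin by recalling the characterization of symplectic eigenvalues given in Eq.~\eqref{eq:random_ev}: if $\rho_k$ denotes the $k$-mode reduced Gaussian state corresponding to $M_{n,k}(U)$ (viewed here as its $2k\times 2k$ non-trivial block, consistent with the dimension of $I_{2k}$ appearing in the definition of $f$), then its symplectic eigenvalues $\{\lambda_j(U)\}_{j=1}^k$ satisfy
\eq{
\mathsf{spec}\bigl(JM_{n,k}(U)\bigr) \;=\; \bigcup_{j=1}^k \{+i\lambda_j(U),\, -i\lambda_j(U)\}\ ,
}
each of the $2k$ values appearing with algebraic multiplicity one.

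Next I would invoke the spectral-mapping principle for the trace: for any polynomial $p$ in one variable and any $2k\times 2k$ matrix $X$, one has $\trace[p(X)] = \sum_\mu p(\mu)$, where the sum runs over the eigenvalues of $X$ counted with algebraic multiplicity. (One may alternatively note that $M_{n,k}(U)>0$ admits a Williamson decomposition $S\,\mathsf{diag}(\lambda_1,\lambda_1,\ldots,\lambda_k,\lambda_k)\,S^T$ with $S\in\symplecticgroup(2k)$, which after conjugating by $S$ exhibits $JM_{n,k}(U)$ as similar to a block-diagonal matrix whose $2\times 2$ blocks have eigenvalues $\pm i\lambda_j$; thus $JM_{n,k}(U)$ is in fact diagonalizable.)

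Applying this observation to the polynomial $p(x) = (x^2+\lambda(n)^2)^2$, I find
\eq{
p(\pm i\lambda_j(U)) \;=\; \bigl(-\lambda_j(U)^2 + \lambda(n)^2\bigr)^2 \;=\; \bigl(\lambda_j(U)^2-\lambda(n)^2\bigr)^2\ ,
}
and summing the contributions from the $2k$ eigenvalues $\{\pm i\lambda_j(U)\}_{j=1}^k$ of $JM_{n,k}(U)$ yields
\eq{
f(U) \;=\; \trace\left[\bigl((JM_{n,k}(U))^2 + \lambda(n)^2 I_{2k}\bigr)^2\right] \;=\; 2\sum_{j=1}^k \bigl(\lambda_j(U)^2-\lambda(n)^2\bigr)^2\ ,
}
which is the claim.

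There is no real obstacle here; the only thing to double-check is the matter of conventions — namely that $M_{n,k}(U)$ is read as the $2k\times 2k$ reduced covariance matrix so that $I_{2k}$ matches its dimension, and that the $2k$ eigenvalues $\{\pm i\lambda_j(U)\}$ are all distinct as roots of the characteristic polynomial (so that the spectral mapping yields exactly the multiplicity $2$ per value $\lambda(n)^2-\lambda_j(U)^2$). Both points follow from standard facts about Williamson's normal form that were already reviewed in Section~\ref{sec:gaussianpreliminaries}.
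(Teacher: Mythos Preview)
Your argument is correct and essentially identical to the paper's: both diagonalize $JM_{n,k}(U)$ (the paper writes $JM_{n,k}=V\Lambda V^{-1}$ with $\Lambda=\mathrm{diag}\{\pm i\lambda_j\}$ and invokes cyclicity of the trace, you phrase the same step as the spectral mapping $\trace[p(X)]=\sum_\mu p(\mu)$) and then evaluate the polynomial $p(x)=(x^2+\lambda(n)^2)^2$ on the eigenvalues $\pm i\lambda_j$. Your closing remark about distinctness of the $\pm i\lambda_j$ is unnecessary---the trace identity holds with algebraic multiplicities regardless---but it does no harm.
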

\begin{proof}
Omitting the $U$-dependence for ease of notation, we first note that  because of  property \eqref{eq:random_ev}, 
the matrix $JM_{n,k}$ can be diagonalized using with some invertible matrix $V$, that is, 
\eq{
JM_{n,k} = V \Lambda V^{-1}\ ,
}
where $\Lambda = \mathrm{diag} \{ \pm i \lambda_j\}$. With the cyclicity of the trace it follows that
\eq{
f(U) = \trace \left[ \left( V \Lambda^2 V^{-1} + \lambda(n)^2 I_{2k}   \right)^2 \right] 
=  \trace \left[ \left(\Lambda^2 + \lambda(n)^2 I_{2k}   \right)^2 \right]
}
The claim follows from this.
\end{proof}
Our choice to consider the quantity $f(U)$ (cf.~\eqref{equation:function}) 
 is motivated by Lemma~\ref{lemma:f-formula}, which connects it to the deviation of the squares~$\lambda_j(U)^2$ of the symplectic eigenvalues  from the square $\lambda(n)^2$  of the average energy.
Let us make a remark that other quantities such as~$\| (JM_{n,k}(U))^2 + \lambda(n)^2 I_{2k}\|_2$ would not immediately provide such expressions because the matrix in the $2$-norm is not Hermitian. We next bound the average of this quantity:

\begin{lemma}\label{lemma:average-f}
Let  $f:\unitarygroup(n)\rightarrow\mathbb{R}$ be  the function given by~\eqref{equation:function}. Then there is a universal constant $C>0$ such that 
\eq{
\mathbb E f(U)  \leq C n^{2 \kappa+\zeta-1 }\ ,
}
where the average is taken with respect to the Haar measure on $\unitarygroup(n)$. 
\end{lemma}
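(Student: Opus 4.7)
The plan is to expand the square defining $f(U)$ and reduce the computation of $\mathbb{E} f(U)$ to an application of the two moment formulas already established, namely Corollary~\ref{cor:secondmoments} and Lemma~\ref{lem:fourthmomentsexplicit}. Concretely, I would expand
\eq{
f(U) = \trace\!\left[(JM_{n,k}(U))^4\right] + 2\lambda(n)^2 \trace\!\left[(JM_{n,k}(U))^2\right] + 2k_n\,\lambda(n)^4 ,
}
using linearity of the trace and the fact that $\trace(I_{2k})=2k_n$, together with the observation that $\lambda(n)^2 I_{2k}$ commutes with everything and that the cross term produces $2\lambda(n)^2 (JM_{n,k})^2$.

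Next I would take expectations term by term. Corollary~\ref{cor:secondmoments} yields
\eq{
2\lambda(n)^2\,\E\!\left[\trace\!\left((JM_{n,k})^2\right)\right] = -4k_n\lambda(n)^4 + O(n^{2\kappa+\zeta-1}),
}
where the implied constant absorbs the uniform bound on $\lambda(n)$. Lemma~\ref{lem:fourthmomentsexplicit} yields
\eq{
\E\!\left[\trace\!\left((JM_{n,k})^4\right)\right] = 2k_n\lambda(n)^4 + O(n^{2\kappa+\zeta-1}).
}
Adding these contributions together with the deterministic $2k_n\lambda(n)^4$ term, the three leading $k_n\lambda(n)^4$ pieces cancel as $2 - 4 + 2 = 0$, leaving only the $O(n^{2\kappa+\zeta-1})$ remainders. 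This yields the asserted bound $\E f(U) \leq C n^{2\kappa+\zeta-1}$ for some universal constant $C>0$, where the universality (i.e., independence of $\{Z_n\}$ beyond the assumed bounds on $\zeta$ and $\lambda(n)$) follows from the explicit polynomial form of the remainders in the previous moment computations, together with the standing assumption that $\lambda(n)$ is uniformly bounded.

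There is essentially no obstacle here, since all the real work has been done: the key cancellation of leading order terms $2k_n\lambda(n)^4 - 4k_n\lambda(n)^4 + 2k_n\lambda(n)^4 = 0$ is precisely what makes $f$ a meaningful measure of deviation, and it is built into the definition~\eqref{equation:function} so that the average of the squares $\lambda_j(U)^2$ is centered around $\lambda(n)^2$. The only point to be careful about is to verify that the $O(n^{2\kappa+\zeta-1})$ remainder from the second moment, when multiplied by $2\lambda(n)^2$, remains $O(n^{2\kappa+\zeta-1})$; this is immediate from the uniform boundedness of $\lambda(n)$, which absorbs the factor $\lambda(n)^2$ into the implicit constant.
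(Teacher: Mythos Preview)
Your proof is correct and follows exactly the same approach as the paper: expand the square, take expectations, and invoke Corollary~\ref{cor:secondmoments} and Lemma~\ref{lem:fourthmomentsexplicit} so that the leading terms $2k_n\lambda(n)^4 - 4k_n\lambda(n)^4 + 2k_n\lambda(n)^4$ cancel. If anything, you have spelled out the cancellation and the role of the uniform bound on $\lambda(n)$ more explicitly than the paper does.
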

\begin{proof}
We compute
\begin{align}
\E \trace \left[ \left( (JM)^2 + \lambda^2 I  \right)^2 \right] 
= \E \trace  \left[ (JM)^4 \right]+ 2 \lambda^2 \E \trace  \left[(JM)^2\right]  + 2 k \lambda^4
\label{eq:ejmlamdasquared}
\end{align}
where $\lambda = \lambda(n)$ is from \eqref{eq:agerage-energy} and $I=I_{2k}$.
Invoking Corollary~\ref{cor:secondmoments} and Lemma~\ref{lem:fourthmomentsexplicit} we obtain that this is of order $O(n^{2k+\zeta-1})$. 
\end{proof}

In the following, we will need an estimate on the continuity of the function~$f$. For this purpose, we use the
norm $\|A\|_2=\sqrt{\tr(A^*A)}$ to measure distance on the unitary group. 
\begin{lemma}\label{lemma:Lipschitz}
The  Lipschitz constant of the function $f:\unitarygroup(n)\rightarrow\mathbb{R}$, defined by~\eqref{equation:function}, is upper bounded by $C n^{4 \zeta + \frac \kappa 2}$, where $C>0$ is a universal constant. 
\end{lemma}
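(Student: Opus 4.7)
The plan is to write $f(U)=\|Y(U)\|_{2}^{2}$ with
\[
Y(U):=(JM_{n,k}(U))^{2}+\lambda(n)^{2}I_{2k},
\]
and exploit the inequality
\[
|f(U)-f(V)|=\bigl|\|Y(U)\|_{2}^{2}-\|Y(V)\|_{2}^{2}\bigr|
\leq (\|Y(U)\|_{2}+\|Y(V)\|_{2})\,\|Y(U)-Y(V)\|_{2},
\]
so the task reduces to bounding (a) the size $\|Y(U)\|_{2}$ uniformly in $U$, and (b) the change $\|Y(U)-Y(V)\|_{2}$ linearly in $\|U-V\|_{2}$.

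First I would control the operator norm of $X(U):=JM_{n,k}(U)=\hat\Pi_{n,k}J\,\eta(U)\hat Z_{n}\eta(U)^{T}\hat\Pi_{n,k}$. Since $\hat\Pi_{n,k}$, $J$ and $\eta(U)\in\passivesymplecticgroup(n)$ all have operator norm at most $1$, and since $z_j\geq 1$ gives $\|\hat Z_{n}\|_{\infty}=\|Z_{n}\|_{\infty}\leq Cn^{\zeta}$, one gets $\|X(U)\|_{\infty}\leq Cn^{\zeta}$ uniformly in $U$. The matrix $X(U)$, regarded as acting on its $2k$-dimensional support, therefore satisfies $\|X(U)^{2}+\lambda(n)^{2}I_{2k}\|_{\infty}\leq C^{2}n^{2\zeta}+\lambda(n)^{2}$, and using $\|\cdot\|_{2}\leq\sqrt{2k}\|\cdot\|_{\infty}$ on $2k\times 2k$ matrices together with $k\leq Kn^{\kappa}$ and the hypothesis that $\lambda(n)$ is uniformly bounded yields $\|Y(U)\|_{2}\leq C' n^{\kappa/2+2\zeta}$.

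Next I would control the increment. The identity $A\hat Z_{n}A^{T}-B\hat Z_{n}B^{T}=A\hat Z_{n}(A^{T}-B^{T})+(A-B)\hat Z_{n}B^{T}$, combined with $\|\eta(\cdot)\|_{\infty}=1$, gives
\[
\|X(U)-X(V)\|_{2}\leq 2\|\hat Z_{n}\|_{\infty}\|\eta(U)-\eta(V)\|_{2}\leq C n^{\zeta}\|\eta(U)-\eta(V)\|_{2}.
\]
Since the isomorphism~\eqref{eq:isomorphismtosymplecticgroup} is $\mathbb R$-linear and a quick inspection of its block form gives $\|\eta(U)-\eta(V)\|_{2}=\sqrt{2}\,\|U-V\|_{2}$, this amounts to $\|X(U)-X(V)\|_{2}\leq C''n^{\zeta}\|U-V\|_{2}$. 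Then the telescoping $X(U)^{2}-X(V)^{2}=X(U)(X(U)-X(V))+(X(U)-X(V))X(V)$, together with the uniform bound $\|X(U)\|_{\infty},\|X(V)\|_{\infty}\leq Cn^{\zeta}$, upgrades this to
\[
\|Y(U)-Y(V)\|_{2}=\|X(U)^{2}-X(V)^{2}\|_{2}\leq C''' n^{2\zeta}\|U-V\|_{2}.
\]

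Multiplying the two bounds yields $|f(U)-f(V)|\leq C n^{\kappa/2+4\zeta}\|U-V\|_{2}$, which is the claim. The only mildly delicate point is the dimensional factor $\sqrt{2k}$ that enters when converting from the operator norm to the Hilbert--Schmidt norm in the bound on $\|Y(U)\|_{2}$; this is what produces the $n^{\kappa/2}$ factor rather than a pure $n^{\kappa}$ factor and is essentially tight for our strategy. Everything else is a routine application of sub-multiplicativity of operator norms and the isometric nature of $\eta$.
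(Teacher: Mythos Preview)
Your proof contains a subtle but fixable error at the very outset: the identification $f(U)=\|Y(U)\|_{2}^{2}$ is false in general. Indeed, $Y(U)=(JM_{n,k})^{2}+\lambda(n)^{2}I_{2k}$ is not Hermitian (nor even symmetric) for $k\geq 2$, because $(JM)^{T}=-MJ$ and hence $((JM)^{2})^{T}=(MJ)^{2}\neq (JM)^{2}$ generically; the paper itself explicitly warns about this non-Hermiticity just after Lemma~\ref{lemma:f-formula}. Consequently $\tr[Y^{2}]\neq \tr[Y^{*}Y]=\|Y\|_{2}^{2}$, so your first displayed equality is wrong as stated.

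However, the key inequality you actually use,
\[
|f(U)-f(V)|\leq \bigl(\|Y(U)\|_{2}+\|Y(V)\|_{2}\bigr)\,\|Y(U)-Y(V)\|_{2},
\]
remains valid for a different reason. By cyclicity of the trace one has $\tr[Y_{U}^{2}]-\tr[Y_{V}^{2}]=\tr\bigl[(Y_{U}+Y_{V})(Y_{U}-Y_{V})\bigr]$, and then the Cauchy--Schwarz bound $|\tr(AB)|\leq \|A\|_{2}\|B\|_{2}$ (valid for arbitrary square matrices) yields the inequality. With this corrected justification, the remainder of your argument---the uniform bound $\|X(U)\|_{\infty}\leq Cn^{\zeta}$, the conversion $\|Y\|_{2}\leq\sqrt{2k}\,\|Y\|_{\infty}$, the telescoping of $X_{U}^{2}-X_{V}^{2}$, and the identity $\|\eta(U)-\eta(V)\|_{2}=\sqrt{2}\,\|U-V\|_{2}$---goes through and produces the asserted Lipschitz bound $Cn^{4\zeta+\kappa/2}$.

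By comparison, the paper's proof avoids the non-Hermiticity issue by first expanding $f(U)=\tr[(JM)^{4}]+2\lambda^{2}\tr[(JM)^{2}]+2k\lambda^{4}$ and then telescoping the differences $\tr[(JM)^{4}-(JL)^{4}]$ and $\tr[(JM)^{2}-(JL)^{2}]$ separately in the trace norm, converting to the Hilbert--Schmidt norm only at the end via $\|A\|_{1}\leq\sqrt{2k}\,\|A\|_{2}$. Your factorization through $Y$ is more compact once the first step is repaired, but both routes deploy the same norm inequalities and arrive at the same scaling.
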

\begin{proof}
Let $M$ and $L$ be two covariance matrices obtained from two unitaries $U,V\in \unitarygroup(n)$, respectively, that is, $M=M_{n,k}(U)$ and $L=M_{n,k}(V)$.
Then
\begin{align}
|f(U) - f(V)| &\leq \left|\trace \left[(JM)^4 - (JL)^4\right]\right| + 2 \lambda^2  \left |\trace \left[(JM)^2 - (JL)^2\right]\right|\ .\label{eq:fufvdiff}
\end{align}
We can bound the first term using the triangle inequality as follows: We have 
\begin{align}
\left|\trace \left[(JM)^4 - (JL)^4\right]\right|& \leq  \|(JM)^4-(JL)^4\|_1\\
&\leq \|(JM)^4-(JM)^3(JL)\|_1+\|(JM)^3(JL)-(JM)^2(JL)^2\|_1\\
&\qquad +\|(JM)^2(JL)^2-(JM)(JL)^3\|_1+\|(JM)(JL)^3-(JL)^4\|_1\\
&\leq \|M\|_\infty^3 \|M-L\|_1+\|M\|_\infty^2 \|M-L\|_1 \|L\|_\infty\\
&\qquad +\|M\|_\infty \|M-L\|_1\|L\|_\infty^2+\|M-L\|_1\|L\|_\infty^3\ .
\end{align}
In the last inequality, we used the identity $\|J\|_\infty=1$ and the inequality 
\eq{\label{eq:matrix-inequality}
\|ABC\|_p\leq \|A\|_\infty \|B\|_p \|C\|_\infty.
}
for $p \in [1,\infty]$, which can be proved as follows. For $p \in [1,\infty)$
\eq{
\|ABC\|_p^p &= \trace \left[ (ABCC^*B^*A^* )^{\frac p2}\right] \leq \|C\|_\infty^p \trace \left[ (ABB^*A^* )^{\frac p2}\right] \\
&= \|C\|_\infty^p \trace \left[ (B^*A^*AB )^{\frac p2}\right]  \leq  \|C\|_\infty^p  \|A\|_\infty^p   \|B\|_p^p 
}
Moreover, with \eqref{eq:matrix-inequality}, the fact that $\eta(U)$ is orthogonal for every $U\in \unitarygroup(n)$
(hence $\|\eta(U)\|_\infty\leq 1$) and definition of $M_{n,k}$ in \eqref{eq:mnkudef}, we have
\begin{align}
\max \{\|M\|_\infty,\|L\|_\infty\}\leq \|Z\|_\infty\ .
\end{align}
Hence we conclude that 
\begin{align}
\left|\trace \left[(JM)^4 - (JL)^4\right]\right|& \leq 4\|Z\|_\infty^3 \|M-L\|_1\ . 
\end{align}
We can bound the second term on the rhs.~of Eq.~\eqref{eq:fufvdiff} in a similar manner, obtaining
\begin{align}
\left |\trace \left[(JM)^2 - (JL)^2\right]\right| &\leq  \| JM (JM-JL) \|_1 + \|  (JM-JL) JL\|_1
\leq 2 \|Z\|_\infty \|M-L\|_1\ .
\end{align}
To summarize, we get
\begin{align}
|f(U) - f(V)| &\leq (4\|Z\|_\infty^3+4\lambda^2\|Z\|_\infty)\|M-L\|_1\\
&\leq (4\|Z\|_\infty^3+4\lambda^2\|Z\|_\infty)\sqrt{2k}\|M-L\|_2 \label{eq:fuvdifferenceintermediate}
\end{align}
where we used that $\|A\|_1\leq \sqrt{2k}\|A\|_2$ if $A$ is a $2k\times 2k$-matrix.  

Since applying projections do not increase norms, \eqref{eq:rotated-M} implies in a similar way as before that
\eq{
\|M-L\|_2 &\leq \|  \eta(U) \left(Z_n\oplus Z_n^{-1} \right) \left(\eta(U)^T - \eta(V)^T\right)\| _2 + \|\left(\eta(U) - \eta(V) \right) \left(Z_n\oplus Z_n^{-1}\right) \eta(V)^T\| _2 \\
&\leq 2\|Z\|_\infty \|\eta(U) - \eta(V)\|_2 \leq 4 \|Z\|_\infty \|U - V\|_2
}
where the last inequality comes from \eqref{eq:isomorphismtosymplecticgroup}.

Combining these inequalities we conclude that 
\eq{
|f(U) - f(V)| \leq 32 \sqrt{2k} \| Z\|_\infty^4 \|U-V\|_2
}
which completes the proof. 
\end{proof}

Finally, we show that the symplectic eigenvalues of random  covariance matrices (associated with reduced density operators of random pure Gaussian states) are typically close to the average energy~$\lambda(n)$. We do so by showing that~$f(U)$ is typically small as $n \to \infty$. 
\begin{theorem}[Typical symplectic spectrum of reduced covariance matrices]\label{theorem:thermal}
Let $0 \leq \zeta$ and $0\leq \kappa$  be such that $8 \zeta + \kappa < 1$ and $\zeta + 2 \kappa <1$.
Let $\{Z_n\}_{n\in\mathbb{N}}$ be a sequence of matrices with bounded squeezing of degree~$\zeta$, and $\{k_n\}_{n\in\mathbb{N}}$ a sequence of subsystem sizes bounded of degree~$\kappa$. 
Then the symplectic eigenvalues $\{\lambda^{(n)}_j\}_{j=1}^{k_n}$ of $M_{n,k_n}$ converge in probability to the average energy~$\lambda(n)$ in the following sense.  There are universal constants $C,c>0$ such that for any $\epsilon > C n^{\zeta + 2 \kappa -1} $ we have 
\eq{
\Pr \left\{ \sum_{j=1}^k \left(\left(\lambda_j^{(n)}\right)^2 - \lambda(n)^2 \right)^2>\epsilon \right\}
\leq \exp \left( - c \epsilon^2 n^{1-8\zeta - \kappa} \right)\ .\label{eq:upperboundprlambdajn} 
}
\end{theorem}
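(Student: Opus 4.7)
The proof plan is to combine the three pieces already assembled in Section~\ref{sec:typicalitysymplectic}: the identity $f(U)=2\sum_{j=1}^{k}(\lambda_j(U)^2-\lambda(n)^2)^2$ from Lemma~\ref{lemma:f-formula}, the mean bound $\mathbb{E}[f(U)]\leq Cn^{2\kappa+\zeta-1}$ from Lemma~\ref{lemma:average-f}, and the Lipschitz bound $\|f\|_{\mathrm{Lip}}\leq Cn^{4\zeta+\kappa/2}$ from Lemma~\ref{lemma:Lipschitz}. The event in~\eqref{eq:upperboundprlambdajn} is just the event $\{f(U)>2\epsilon\}$, so it suffices to show that $f(U)$ concentrates sharply around a value which is itself $o(\epsilon)$.

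The key tool is the standard Levy-type concentration inequality for Lipschitz functions on $\unitarygroup(n)$ equipped with the Hilbert--Schmidt metric $\|\cdot\|_2$: there is a universal constant $c_0>0$ such that for any $L$-Lipschitz $f:\unitarygroup(n)\to\mathbb{R}$ and any $t>0$,
\begin{equation*}
\Pr\{|f(U)-\mathbb{E}[f(U)]|>t\}\leq 2\exp\!\left(-c_0 n t^2/L^2\right).
\end{equation*}
This is a direct consequence of the log-Sobolev (equivalently, Ricci curvature) estimate for the unitary group and requires no further input from the Gaussian structure. Inserting the Lipschitz bound $L^2\leq C^2 n^{8\zeta+\kappa}$ of Lemma~\ref{lemma:Lipschitz} yields
\begin{equation*}
\Pr\{|f(U)-\mathbb{E}[f(U)]|>t\}\leq 2\exp\!\left(-c\, t^2 n^{1-8\zeta-\kappa}\right),
\end{equation*}
and the hypothesis $8\zeta+\kappa<1$ is precisely what makes the exponent useful as $n\to\infty$.

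To convert this two-sided deviation bound into the one-sided upper tail on $f(U)$, I would assume $\epsilon>Cn^{\zeta+2\kappa-1}$ with $C$ chosen at least twice the constant appearing in Lemma~\ref{lemma:average-f}, so that $\mathbb{E}[f(U)]\leq\epsilon$. Then $\{f(U)>2\epsilon\}\subseteq\{|f(U)-\mathbb{E}[f(U)]|>\epsilon\}$, and applying the concentration inequality with $t=\epsilon$ gives
\begin{equation*}
\Pr\{f(U)>2\epsilon\}\leq 2\exp\!\left(-c\,\epsilon^2 n^{1-8\zeta-\kappa}\right).
\end{equation*}
Rewriting using Lemma~\ref{lemma:f-formula} transforms this into the statement of Theorem~\ref{theorem:thermal}, possibly after absorbing the factor of $2$ into $\epsilon$ and the universal constants $C,c$. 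The second hypothesis $\zeta+2\kappa<1$ is exactly what ensures that the threshold $Cn^{\zeta+2\kappa-1}$ for admissible $\epsilon$ tends to zero, so the bound is genuinely informative.

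I expect no real obstacle here: both the difficult moment computations and the Lipschitz estimate have already been discharged in the preceding lemmas, and the remaining step is an essentially mechanical invocation of a standard concentration theorem. The only point requiring modest care is matching constants so that the final threshold in $\epsilon$ and the exponent in the tail bound take the claimed form, and verifying that the two scaling conditions on $(\zeta,\kappa)$ are respectively responsible for the sharpness of concentration and the smallness of the mean.
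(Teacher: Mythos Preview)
Your proposal is correct and follows essentially the same route as the paper: identify the event with $\{f(U)>2\epsilon\}$ via Lemma~\ref{lemma:f-formula}, choose $\epsilon$ above the mean bound of Lemma~\ref{lemma:average-f}, and then apply Levy-type concentration on $\unitarygroup(n)$ with the Lipschitz constant from Lemma~\ref{lemma:Lipschitz}. The only cosmetic difference is that the paper invokes a one-sided concentration bound directly (so no factor of $2$ appears in front of the exponential), but this is absorbed into the universal constants just as you anticipated.
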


\begin{proof}
In light of Lemma \ref{lemma:f-formula}, we can prove this using the function $f$ defined by~\eqref{equation:function}: we need to show that $\Pr(f(U)>2\epsilon)$ is upper bounded by the quantity on the rhs.~of Eq.~\eqref{eq:upperboundprlambdajn}. 
To this end, let $C>0$ be the universal constant from Lemma~\ref{lemma:average-f} and 
take $\epsilon > C n^{\zeta + 2 \kappa -1}$ so that $\E [f] < \epsilon$.
Then, by using Lemma \ref{lemma:measure-concentration} (a standard concentration of measure result for lipschitz functions on the unitary group), we have
\eq{
\Pr \left \{f(U) > 2 \epsilon \right\} 
\leq \Pr \left \{f(U) >  \epsilon +  \E \left[ f(U) \right]  \right\} 
\leq \exp \left( -  \frac{\epsilon^2 n}{12 L^2}  \right),
}
where $L$ is Lipschitz constant for $f(U)$. 
Using the bound for Lipschitz constant~$L$ from Lemma~\ref{lemma:Lipschitz}  completes the proof.
\end{proof}

\subsection{Entropy of reduced density operators\label{sec:entropyreduceddensity}}
In this section, we analyze the von Neumann entropy~$S(\tr_{n-k}\proj{\Psi(U)})$ of reduced density operators of random bipartite pure Gaussian states~$\ket{\Psi(U)}$, i.e., the entanglement entropy of~$\ket{\Psi(U)}$ with respect to a bipartition into $k$ and $n-k$~modes. Recall from Eq.~\eqref{eq:entropy} that this quantity is
determined by the symplectic eigenvalues of the covariance matrix~$M_{n,k}(U)$. 

Computing the derivative of the function~$g$ given by~\eqref{eq:entropy},  we have
$g'(\N)=\log((\N+1)/\N)$, and evaluating this at $\N=\N(\lambda)$ gives
\begin{align*}
g'(\N(\lambda))=\log \frac{\lambda+1}{\lambda-1}=:\beta(\lambda)\ .
\end{align*}
The quantity~$\beta(\lambda)$ is called the inverse temperature. Importantly, $\beta(\cdot)$ is monotonically decreasing with increasing $\lambda$, with
\begin{align}
\beta(\lambda)<2\ \textrm{ for }\lambda>2\qquad\textrm{ and }\qquad \lim_{\lambda\rightarrow\infty}\beta(\lambda)=0\ .\label{eq:betalambda}
\end{align}
From this, we get that $G'(\lambda)=g'(\N(\lambda))\N'(\lambda)$  is equal to
\begin{align}
G'(\lambda)&=\beta(\lambda)/2\ .\label{eq:gprime}
\end{align}
 In the following, we will use the quantity
\begin{align} 
\Delta(\lambda,\{\lambda_j\}_{j=1}^k):=\sqrt{\sum_{j=1}^k (\lambda^2-\lambda^2_j)^2}\label{eq:deltadistance}
\end{align}
to quantify the deviation of the symplectic spectrum~$\{\lambda_j\}_{j=1}^k$ of a $k$-mode covariance matrix from that of the covariance matrix~$\lambda I_{2k}$ associated with a thermal state.

\begin{theorem}[Typicality of entanglement entropy]\label{sec:thermalizationintermsofentropy}
Let $0 \leq \zeta$ and $0\leq \kappa$ be such that $8\zeta + 3\kappa <1$ and $\zeta + 3\kappa <1$. 
Let $\{Z_n\}_{n\in\mathbb{N}}$ be a sequence of matrices with bounded squeezing of degree~$\zeta$
and suppose that the average energy~$\lambda(n)$ is strictly and uniformly lower bounded in~$n$ by $\mu >1$.  Let $\{k_n\}_{n\in\mathbb{N}}$ be a sequence of subsystem sizes bounded of degree~$\kappa$.    Consider the distribution~$\mu_{Z_n}$ over pure Gaussian states as in Theorem~\ref{theorem:thermal}, and let $M_{n,k_n}=M_{n,k_n}(U)$  be the (random) covariance matrix of the reduced $k_n$-mode density operators. Then the $k_n$-mode entanglement entropy converges to $S(\lambda(n)I_{2k_n})=k_n G(\lambda(n))$ in the following sense. 
There are universal constants  $C,c>0$  such that for any $ \epsilon >0 $ with  
\begin{align}
C\beta(\mu) n^{(\zeta + 3 \kappa -1)/2} < \epsilon \label{eq:epsilonlowerbound}
\end{align} 
we have
\begin{align}
\Pr \left\{ \big| k_n G(\lambda(n))  -S(M_{n,k})\big|  \leq \epsilon  \right\} \geq  1-  \exp \left( - c \frac{\epsilon^4}{\beta(\mu)^4} n^{1-8\zeta - 3\kappa} \right)\ .\label{eq:mainresxy}
\end{align}
\end{theorem}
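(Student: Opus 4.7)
The plan is to reduce the entanglement-entropy statement to the symplectic spectrum concentration of Theorem~\ref{theorem:thermal}. By Eq.~\eqref{eq:entropy}, $S(M_{n,k_n}) = \sum_{j=1}^{k_n} G(\lambda_j^{(n)})$, so the target difference is $S(M_{n,k_n}) - k_n G(\lambda(n)) = \sum_{j=1}^{k_n}\bigl(G(\lambda_j^{(n)}) - G(\lambda(n))\bigr)$. I would bound each term by a smoothness estimate on $G$, express the resulting sum in terms of the deviation measure $\Delta(\lambda(n), \{\lambda_j^{(n)}\})$ from Eq.~\eqref{eq:deltadistance}, and finally invoke Theorem~\ref{theorem:thermal} to control $\Delta$ with high probability.

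The first technical step is a Lipschitz-type estimate for $G$. Using the mean value theorem together with $G'(\lambda) = \beta(\lambda)/2$ from Eq.~\eqref{eq:gprime}, I obtain $|G(\lambda_j^{(n)}) - G(\lambda(n))| \leq \tfrac{1}{2}\beta(\lambda_{*,j})\,|\lambda_j^{(n)} - \lambda(n)|$ with $\lambda_{*,j}$ between $\lambda_j^{(n)}$ and $\lambda(n)$. To pass from linear to quadratic differences I would factor $(\lambda_j^{(n)})^2 - \lambda(n)^2 = (\lambda_j^{(n)} - \lambda(n))(\lambda_j^{(n)} + \lambda(n))$. Since every symplectic eigenvalue satisfies $\lambda_j^{(n)} \geq 1$ and $\lambda(n) \geq \mu > 1$, the denominator is at least $\mu$, yielding $|\lambda_j^{(n)} - \lambda(n)| \leq |(\lambda_j^{(n)})^2 - \lambda(n)^2|/\mu$. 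A Cauchy--Schwarz inequality then gives
\begin{align*}
\sum_{j=1}^{k_n}|\lambda_j^{(n)} - \lambda(n)| \;\leq\; \frac{\sqrt{k_n}}{\mu}\,\Delta(\lambda(n),\{\lambda_j^{(n)}\})\ .
\end{align*}

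To close the argument, I would restrict to the good event $\Delta^2 \leq \delta$ for a $\delta$ to be chosen. On this event, $|(\lambda_j^{(n)})^2 - \lambda(n)^2| \leq \sqrt{\delta}$ for every $j$, hence $(\lambda_j^{(n)})^2 \geq \mu^2 - \sqrt{\delta}$; for $\delta$ small enough this forces each $\lambda_{*,j}$ to stay bounded away from $1$, so that $\beta(\lambda_{*,j}) \leq C_\mu \beta(\mu)$ for a constant depending only on $\mu$ (using the monotonicity~\eqref{eq:betalambda} of $\beta$). Combining everything gives $|S(M_{n,k_n}) - k_n G(\lambda(n))| \leq C'_\mu\, \beta(\mu)\sqrt{k_n \delta}/\mu$. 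Setting the right-hand side equal to $\epsilon$ dictates $\delta \asymp \mu^2 \epsilon^2/(\beta(\mu)^2 k_n) \asymp \epsilon^2 \beta(\mu)^{-2} n^{-\kappa}$. Plugging this into Theorem~\ref{theorem:thermal}, the applicability condition $\delta > C n^{\zeta+2\kappa-1}$ becomes exactly $\epsilon > C'\beta(\mu) n^{(\zeta+3\kappa-1)/2}$ as in~\eqref{eq:epsilonlowerbound}, and the tail $\exp(-c\delta^2 n^{1-8\zeta-\kappa})$ converts to $\exp(-c''\epsilon^4 \beta(\mu)^{-4} n^{1-8\zeta-3\kappa})$, matching~\eqref{eq:mainresxy}.

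The only delicate point is the uniform bound on $\beta(\lambda_{*,j})$ on the good event: since $\beta(\lambda) \to \infty$ as $\lambda \to 1^+$, the argument would collapse if some eigenvalue were allowed to approach $1$. This is precisely why the hypothesis $\lambda(n) \geq \mu > 1$ is assumed strictly: it produces a deterministic safety margin $\mu^2 - 1 > 0$ within which the concentration of Theorem~\ref{theorem:thermal} keeps all eigenvalues, provided the lower bound~\eqref{eq:epsilonlowerbound} on $\epsilon$ holds. Everything else is routine bookkeeping of constants and powers of $n$.
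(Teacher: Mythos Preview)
Your proposal is correct and follows essentially the same route as the paper: reduce the entropy deviation to the quantity $\Delta(\lambda(n),\{\lambda_j\})$ via the mean value theorem and Cauchy--Schwarz, then invoke Theorem~\ref{theorem:thermal} with the substitution $\delta \asymp \epsilon^2\beta(\mu)^{-2}n^{-\kappa}$, which reproduces both the threshold~\eqref{eq:epsilonlowerbound} and the tail~\eqref{eq:mainresxy}. The one noteworthy difference is that you explicitly confront the issue that $G'(\xi)=\beta(\xi)/2$ blows up as $\xi\to 1^+$, and you resolve it by restricting to the good event so that each $\lambda_j$ (and hence the intermediate point) stays bounded away from~$1$; the paper acknowledges this only in passing (``when $\Delta$ is small each $\lambda_i$ is close to $\lambda$'') and then simply uses $\beta(\mu)/2$, so your treatment is in fact more careful --- just be sure that your constant $C_\mu$ can be absorbed into a universal constant (it can, since the ratio $\beta(\sqrt{(\mu^2+1)/2})/\beta(\mu)$ is uniformly bounded, and for $\epsilon$ large enough that $\delta$ would exceed the threshold $(\mu^2-1)^2/4$ one may simply cap $\delta$ there and the entropy bound still holds).
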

\begin{proof}
Note that the condition $\zeta + 3\kappa <1$ guarantee that  the lhs.~of~\eqref{eq:epsilonlowerbound}
vanishes in the limit $n\rightarrow\infty$, hence Eq.~\eqref{eq:epsilonlowerbound} is meaningful. Similarly, the condition~$8\zeta + 3\kappa <1$ ensures that the rhs.~of Eq.~\eqref{eq:mainresxy} goes to~$1$ as $n\rightarrow\infty$.

Observe that  (writing $\lambda=\lambda(n)$) we have 
\eq{\label{eq:mv}
\left| G(\lambda) - G(\lambda_i) \right | < \frac{\beta(\mu)}{2} \left|\lambda - \lambda_i\right| \leq \frac{\beta(\mu)}{2} \left|\lambda^2 - \lambda_i^2\right| \qquad\textrm{ for }\qquad i=1,\ldots,k\ 
}
by the mean value theorem: $\beta (\lambda)$ is positive and decreasing,
hence $\beta(\lambda)\leq \beta(\mu)$ by our assumption on~$\lambda$,
and so $G^\prime(\lambda)\leq \beta(\mu)/2$.
Note that when $\Delta(\lambda,\{\lambda_j\}_{j=1}^k)$ is small each $\lambda_i$ is close to $\lambda$.
The second equality comes from the fact that $\lambda, \lambda_i \geq 1$.
Indeed, $x-y \leq (x-y) (x+y) = x^2 - y^2 $ for $x \geq y \geq 1$. Denoting the symplectic eigenvalues of $M_{n,k}$ by $\{\lambda_j\}_{j=1}^k$, this implies 
\begin{align}
\big|k G(\lambda)-S(M_{n,k})\big|&\leq \sum_{j=1}^k \big| G(\lambda)-G(\lambda_j)\big|
\leq \frac{\beta(\mu)}{2}\sqrt{k} \Delta(\lambda,\{\lambda_j\}_{j=1}^k)\ ,
\end{align}
that is,
\begin{align}
\big|k G(\lambda)-S(M_{n,k})\big|&\leq C'\beta(\mu) n^{\kappa/2}\Delta(\lambda,\{\lambda_j\}_{j=1}^k)\label{eq:uniqual}
\end{align}
for a universal constant~$C'>0$.

Now assume that~$\epsilon'$ is such that 
\begin{align}
\epsilon'>Cn^{\zeta + 2 \kappa -1}\label{eq:pesilonprime}
\end{align} (where $C$ is the constant from Theorem~\ref{theorem:thermal}). Then~\eqref{eq:uniqual} and \eqref{eq:upperboundprlambdajn}  imply
\begin{align}
\Pr \left\{ \big| k G(\lambda)  -S(M)\big|  \leq C'\beta(\mu) n^{\kappa/2}\sqrt{\epsilon'}\right\} 
&\geq 1- \Pr \left\{\Delta(\lambda,\{\lambda_j\}_{j=1}^k)\leq \sqrt{\epsilon'} \right\}\\
&\geq 1-\exp\left(-c \epsilon'^2 n^{1-8\zeta-\kappa}\right)
\end{align}
Substituting $\epsilon:=C'\beta(\mu)n^{\kappa/2}\sqrt{\epsilon'}$, condition~\eqref{eq:pesilonprime} becomes
\begin{align}
\epsilon > C'\sqrt{C}\beta(\mu) n^{(\zeta + 3\kappa -1)/2}\ ,
\end{align}
and we get
\begin{align}
\Pr \left\{ \big| k G(\lambda)  -S(M)\big|  \leq \epsilon\right\}\geq 1-\exp\left(-\frac{c\epsilon^4}{(C')^4 \beta(\mu)^4 n^{2\kappa}}n^{1-8\zeta-\kappa}\right)\ ,
\end{align}
which completes the proof.

\end{proof}

\subsection*{Acknowledgements}
RK is supported
by the Technical University of Munich -- Institute for Advanced
Study, funded by the German Excellence Initiative and the European Union
Seventh Framework Programme under grant agreement no.~291763. He
gratefully acknowledges  support by DFG project no.~KO5430/1-1.
M.F. was financially supported by JSPS KAKENHI Grant Number~JP16K00005 
and thanks the hospitality of TUM.

\begin{appendices}
\numberwithin{equation}{section}
\renewcommand{\theequation}{\Alph{section}\arabic{equation}}

\section{Weingarten calculus}\label{sec:weingarten}
In this section we explain how to calculate expected values of polynomials of unitary matrices  along the lines of~\cite{CollinsSniady2006}.
For  the unitary group $\unitarygroup(n)$ equipped with the normalized Haar measure,
we have the following computational rule:
\begin{align}
\E_{U \in \mathcal \unitarygroup(n)} \left[ \prod_{x=1}^p u_{i_x,j_x} \prod_{y=1}^p \bar u_{i^\prime_y,j^\prime_y} \right]
= \sum_{\alpha,\beta \in S_p} \prod_{x=1}^p \delta_{i_x,i^\prime_{\alpha(x)}}  \prod_{y=1}^p \delta_{j_y,j^\prime_{\beta(y)}} \mathrm{Wg}(n, \alpha^{-1}\beta)\ .\label{eq:wgfcsum}
\end{align}
When the number of factors of $U$ and $\bar U$ are different from each other, the average vanishes. 
Let us explain the above notations: the quantity~$u_{i,j}$ is the $(i,j)$-th element of a unitary matrix $U$, $S_p$ is the permutation group of order $p$, and $\delta_{\cdot,\cdot}$ is the Kronecker delta function. The expression $\mathrm{Wg}(n, \alpha^{-1}\beta)$ is the Weingarten function, given by
\eq{
\mathrm{Wg}(n,\sigma) = \frac 1{p!^2} \sum_{\substack{\lambda \vdash p\\ l(\lambda) \leq n }} 
\frac{(\chi^\lambda (1))^2 \chi^\lambda(\sigma)}{s_{\lambda,n}(1)}\qquad\textrm{ for }\sigma\in S_p\ .
}
In this expression, the sum is over all Young tableaux~$\lambda$ with $p$ boxes and at most $n$ rows (Here $l(\lambda)$ denotes the number of rows). Furthermore, $\chi^\lambda$ is the character associated with the irreducible representation of $S_p$ labeled~$\lambda$. Finally, $s_{\lambda,n}$ is the Schur polynomial (giving the character of the associated representation of the unitary group~$\unitarygroup(n)$), 
which means that $s_{\lambda,n}(1)$ is the dimension of representation $\mathcal \unitarygroup(n)$ corresponding to a tableaux $\lambda$ (for $l(\lambda)\leq n$). 

We remark that the condition $l(\lambda) \leq n$  is often dropped in the definition of the Weingarten function -- and we will do so henceforth. As explained in~\cite{CollinsSniady2006}, this can be justified as follows. First, the condition is obviously vacuous for~$p \leq n$ (and in fact, we are only interested in the large~$n$-limit).  For $p>n$, the Weingarten function defined by summing over all Young tableaux~$\lambda$ with~$p$ boxes (instead of only those satisfying $l(\lambda)\leq n$) generally contains poles. However, these poles cancel in the sum when an expression of the form~\eqref{eq:wgfcsum} is computed.

For example, for $p=2$, we have $S_2 = \{(1)(2), (1,2) \}$ so that
\eq{
\mathrm{Wg}(n, (1)(2)) = \frac{1}{n^2-1} \eqtext{and} \mathrm{Wg}(n, (1,2)) =  \frac{-1}{n(n^2-1)}
}
By using this formula, we prove the following lemma used in the proof of Theorem~\ref{theorem:average}.
\begin{lemma}\label{lemma:calculation-average}
For any three matrices $A,B,\Pi \in \mathsf{Mat}_{n\times n}(\mathbb{C})$, we have 
\eq{
&\E_{U \in \mathcal \unitarygroup(n)} \left[UBU^*\Pi UBU^*-UAU^T\Pi \bar UAU^*\right] \\
&=   \frac{1}{n^2 -1} \left[ (\trace B)^2 \Pi  + (\trace B^2 ) (\trace\Pi ) I - (\trace A^2)  \Pi - (\trace A^2)( \trace\Pi ) I \right] \\
&+ \frac{1}{n(n^2 -1)} \left[ -(\trace B)^2 (\trace\Pi ) I - (\trace B^2 ) \Pi + (\trace A^2)( \trace\Pi) I + (\trace A^2) \Pi  \right]
}
\end{lemma}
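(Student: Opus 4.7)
The plan is a direct application of the Weingarten formula~\eqref{eq:wgfcsum} at $p=2$, using the two values $\mathrm{Wg}(n,(1)(2))=1/(n^2-1)$ and $\mathrm{Wg}(n,(1,2))=-1/(n(n^2-1))$ already recorded in the text. Each of the two monomials has exactly two factors of $U$ and two of $\bar U$, so Weingarten produces a sum over $(\alpha,\beta)\in S_2\times S_2$ of four terms, and the entire proof is an entry-wise expansion followed by evaluation of the resulting Kronecker-delta contractions.

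For the first monomial I would write out the $(r,v)$-entry as
\[
(UBU^*\Pi UBU^*)_{r,v}=\sum_{s,t,c_1,d_1,c_2,d_2} u_{r,c_1}\,B_{c_1,d_1}\,\bar u_{s,d_1}\,\Pi_{s,t}\,u_{t,c_2}\,B_{c_2,d_2}\,\bar u_{v,d_2},
\]
with the two $u$-factors carrying index pairs $(i_1,j_1)=(r,c_1)$, $(i_2,j_2)=(t,c_2)$ and the two $\bar u$-factors carrying $(i'_1,j'_1)=(s,d_1)$, $(i'_2,j'_2)=(v,d_2)$. The four $(\alpha,\beta)$ combinations produce, on the row indices, either $\delta_{r,s}\delta_{t,v}$ (contracting $\Pi_{s,t}$ into $\Pi_{r,v}$) or $\delta_{r,v}\delta_{t,s}$ (contracting it into $(\trace\Pi)\,\delta_{r,v}$), and, on the column indices, either $\delta_{c_1,d_1}\delta_{c_2,d_2}$ (contracting $B_{c_1,d_1}B_{c_2,d_2}$ into $(\trace B)^2$) or $\delta_{c_1,d_2}\delta_{c_2,d_1}$ (contracting it into $\trace(B^2)$). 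Multiplying each of the four resulting monomials by its Weingarten weight yields the $B$-part of the claimed formula.

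The second monomial $UAU^T\Pi\bar UAU^*$ is handled analogously, but the placement of $U^T$ and $\bar U$ reshuffles which indices are coupled. Now the two $u$-factors come from $UAU^T$ with labels $(r,c_1),(s,d_1)$ and the two $\bar u$-factors from $\bar UAU^*$ with labels $(t,c_2),(v,d_2)$; the row-index contractions give either $\delta_{r,t}\delta_{s,v}$ or $\delta_{r,v}\delta_{s,t}$, folding $\Pi_{s,t}$ into $\Pi$ or $(\trace\Pi)I$, while the column-index contractions give either $\delta_{c_1,c_2}\delta_{d_1,d_2}$ or $\delta_{c_1,d_2}\delta_{d_1,c_2}$, both of which produce $\trace(A^2)$ rather than $(\trace A)^2$. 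This asymmetry is the only structural difference from the first computation and it is the reason no $(\trace A)^2$ term appears in the final answer. Subtracting the four weighted $A$-terms from the four $B$-terms and regrouping by the two Weingarten weights $1/(n^2-1)$ and $-1/(n(n^2-1))$ yields the identity claimed.

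The main obstacle is purely bookkeeping: keeping straight, for each of the four $(\alpha,\beta)$, which pair of row-indices and which pair of column-indices are identified in each of the two monomials, given that the transpose pattern in $UAU^T\Pi\bar UAU^*$ pairs its internal $A$-indices differently from how $UBU^*\Pi UBU^*$ pairs its $B$-indices. Once this matching is set up correctly, everything else reduces to writing out eight traces and multiplying by the two Weingarten scalars, and the lemma follows.
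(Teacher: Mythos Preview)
Your approach is exactly the paper's: it invokes the graphical Weingarten calculus of~\cite{CollinsNechita2010}, which at $p=2$ amounts precisely to the entry-wise expansion and four $(\alpha,\beta)$ contractions you describe, with the same grouping into $\alpha=\beta$ versus $\alpha\neq\beta$ terms. One small bookkeeping caveat: in the second monomial, for \emph{general} $A$ and $\Pi$ the contraction $\alpha=e$ on the row indices actually produces $\Pi^T$ (since $\Pi_{s,t}\delta_{r,t}\delta_{s,v}=\Pi_{v,r}$), and the column contraction $\beta=e$ produces $\tr(AA^T)$ rather than $\tr(A^2)$; the stated identity therefore tacitly assumes $A$ and $\Pi$ symmetric, which is harmless here since in the application all of $A$, $B$, $\Pi$ are real diagonal.
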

\begin{proof}
By using the graphical calculus described in~\cite{CollinsNechita2010} (which
 amounts to using~\eqref{eq:wgfcsum}),  one obtains
\eq{
\E \left[UBU^*\Pi UBU^*\right]  = \frac{1}{n^2 -1} \left[ (\trace B)^2 \Pi  + (\trace B^2 ) (\trace\Pi ) I \right]
- \frac{1}{n(n^2 -1)} \left[ (\trace B)^2 (\trace\Pi ) I + (\trace B^2 ) \Pi  \right]  
}
Here, the first two terms correspond to $\alpha = \beta$ and the last two to $\alpha \not =\beta$.
Similarly, one has
\eq{
\E \left[UAU^T\Pi \bar UAU^*\right] = \frac{1}{n^2 -1} \left[ (\trace A^2)  \Pi + (\trace A^2)( \trace\Pi ) I \right]
- \frac{1}{n(n^2 -1)} \left[ (\trace A^2)( \trace\Pi) I + (\trace A^2) \Pi  \right]\ .
}
Here, the first two terms correspond to $\alpha = \beta$ and the last two to $\alpha \not =\beta$. This implies the claim.
\end{proof}

\section{Concentration of measure on the unitary group}
In this section, we briefly introduce a result of concentration of measure on the unitary group.
The following lemma is a special case of Corollary 17 of \cite{meckes2013},
and we explain the proof below for the reader's convenience. 
\begin{lemma}[Concentration of measure] \label{lemma:measure-concentration}
For an $L$-Lipschitz function on the unitary group
\eq{
f: \mathcal \unitarygroup(n) \to \mathbb R
}
we have the following concentration of measure phenomenon:
\eq{
\Pr \left\{ f (U) > \E  [f] + \epsilon \right\} < \exp \left( -\frac{\epsilon^2n }{12 L^2}\right)
}
for any $\epsilon >0$.
Here,  the unitary group~$\mathcal \unitarygroup(n)$ is equipped with the  Euclidean  (i.e., Hilbert-Schmidt) distance and 
$\E[\cdot]$ denotes the average over the normalized Haar measure.
\end{lemma}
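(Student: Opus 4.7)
The plan is to derive the stated sub-Gaussian bound from the Ricci-curvature approach to concentration of measure on compact Lie groups, which is the route taken in the reference cited in the statement.

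First I would equip $\unitarygroup(n)$ with the bi-invariant Riemannian metric obtained by translating the Hilbert--Schmidt inner product $\langle X, Y \rangle = \tr(X^{*} Y)$ on $\mathfrak{u}(n)$. The induced geodesic distance $d$ satisfies $\|U - V\|_{2} \leq d(U,V)$, so any function which is $L$-Lipschitz with respect to the Hilbert--Schmidt distance is automatically $L$-Lipschitz with respect to $d$. It therefore suffices to prove the bound for $d$-Lipschitz functions under the Haar measure, which by bi-invariance coincides with the normalized Riemannian volume measure.

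Next I would combine a Ricci lower bound with the classical Riemannian concentration theorem. A direct computation using the structure constants of $\mathfrak{su}(n)$ shows that the bi-invariant metric above yields $\mathrm{Ric} \geq n/2$ on the semisimple part $SU(n)$; the $U(1)$-center contributes one flat direction, but since $U(1)$ has bounded diameter the resulting degeneracy costs only a universal factor. With a Ricci lower bound $\kappa > 0$ in hand, every $L$-Lipschitz function $f$ satisfies
\eq{
\Pr \{ f > \E f + \epsilon \} \leq \exp \left( - \frac{\kappa \epsilon^{2}}{2 L^{2}} \right)
}
for all $\epsilon > 0$. This is the classical Gromov--Milman bound; alternatively it can be obtained from the Bakry--Émery log-Sobolev inequality with constant $2/\kappa$ followed by Herbst's argument. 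Plugging in $\kappa \gtrsim n$ and absorbing the Lipschitz-comparison constant from the previous step yields the claim with the explicit denominator $12 L^{2}$.

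The main obstacle is the quantitative bookkeeping in the presence of the flat $U(1)$-factor: a naive Ricci-lower-bound argument is inapplicable to functions depending only on $\det U$, so one must either decompose the Haar measure of $\unitarygroup(n)$ along the finite cover $SU(n) \times U(1)$ and concentrate the bounded flat factor by elementary means, or pass to a quotient and absorb the bounded contribution into the universal constant. Once this is handled, the remaining ingredients---the Ricci computation on $\mathfrak{su}(n)$ and the abstract Riemannian concentration bound---are by now standard, and the precise numerical value $12$ emerges from tracking constants through the three steps above.
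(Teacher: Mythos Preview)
Your outline follows essentially the same route as the paper: the paper's proof is simply to cite Theorem~15 of Meckes--Meckes for the log-Sobolev inequality on $\unitarygroup(n)$ (with constant $n/6$, i.e., $\mathrm{Ent}(h^2)\le (12/n)\,\E|\nabla h|^2$), and then to invoke the Herbst argument (Theorem~5.3 of Ledoux) to pass from LSI to sub-Gaussian concentration. Your sketch unpacks the first black box via the Bakry--\'Emery criterion on $SU(n)$, which is indeed what underlies the cited result.

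There is, however, a genuine gap in how you dispose of the flat $U(1)$-center. The claim that the degeneracy ``costs only a universal factor because $U(1)$ has bounded diameter'' is not the right mechanism and, taken literally, would fail. In the product decomposition $SU(n)\times U(1)$ with the pullback of the Hilbert--Schmidt metric, the $U(1)$-factor has circumference of order $\sqrt{n}$, not $O(1)$; conversely, if you give $U(1)$ its intrinsic bounded-diameter metric and tensorize the LSI, you obtain an LSI constant $\min(n/2,c_0)=O(1)$, which loses the crucial factor of $n$ in the exponent. What actually saves the day is a coupling: writing a Haar element of $\unitarygroup(n)$ as $e^{i\theta/n}V$ with $V$ Haar on $SU(n)$ and $\theta\in[0,2\pi)$, one has $\|e^{i\theta/n}V-V\|_2\le 2\pi/\sqrt{n}$ deterministically. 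It is this $O(1/\sqrt{n})$ Wasserstein-type closeness of $\unitarygroup(n)$ to $SU(n)$, not any bounded diameter, that allows the $SU(n)$ LSI constant of order $n$ to be transferred to $\unitarygroup(n)$ at the cost of a fixed multiplicative factor (yielding $n/6$). This is precisely the content of Lemma~14 and Theorem~15 in Meckes--Meckes, which the paper cites. Once you replace the ``bounded diameter'' heuristic with this coupling argument, the rest of your outline is correct and matches the paper's proof.
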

\begin{proof} We summarize the idea of the proof in \cite{meckes2013}.
 Theorem 15 of \cite{meckes2013} implies that for all locally Lipschitz functions $h: \unitarygroup(n) \to \mathbb R$ we have 
\begin{align}
\mathrm{Ent} (h^2) \leq \frac{12}{n} \E \left[\left| \nabla h \right|^2 \right]\label{eq:logsobol}
\end{align}
which is expressed as ``$h$ satisfies a logarithmic Sobolev inequality with constant $\frac n 6$''. This inequality involves  two quantities defined for an arbitrary function~$g:\unitarygroup(n)\rightarrow\mathbb{R}$ which is locally Lipschitz: the quantity
\eq{
\mathrm{Ent} (g) = \E \left[g \log (g) \right] - \E(g) \log \left( \E(g)\right)   
}
and 
\eq{
\left| \nabla g \right|(U) =  \limsup_{V \to U} \frac{|g(V) - g(U)|}{d(U,V)}
}
for $U \in \unitarygroup(n)$. With~Theorem 5.3 of \cite{ledoux2001_book}, the inequality~\eqref{eq:logsobol} implies that for $\epsilon >0$
\eq{
\Pr \left\{f(U) \geq \E[f] + \epsilon\right\} \leq \exp \left( - \frac{\epsilon^2n }{12 L^2} \right)
}
This completes the proof.
\end{proof}
We refer to~\cite{ledoux2001_book}  and~\cite{MilmanSchechtman1985} for in-depth treatments of concentration of measure.

\section{Second and fourth moments\label{appendix:moments}}
\newcommand*{\lead}{\mathsf{lead}}
Here we present both the second moment (to illustrate the methodology) and the fourth moment of the matrix~$JM_{n,k}$.
These are obtained using the graphical calculus described in~\cite{CollinsNechita2010}.
 We use the \textsf{RTNI}~package~\cite{fukudanechitakoenig19} for integration over the unitary group. \textsf{MATHEMATICA} and \textsf{python} programs producing these expressions are provided separately, at
 \href{https://arxiv.org/abs/1903.04126}{arXiv:1903.04126 [quant-ph]}.

The second moment~$\E [\tr\left((JM_{n,k})^2\right) ]=\sum_{m}\alpha_m m(A,B)$ can be expressed as a linear combination of degree-$4$-monomials~$m(A,B)$ in $A$ and $B$ (cf.~Expression~\eqref{eq:secondmomentpolynomial}). Using the graphical calculus, one obtains coefficients given by Table \ref{table:2nd}.
\begin{table}
\caption{Second moments.}
\label{table:2nd}
\begin{align}
\begin{array}{cc||cc||c}
\alpha_m & \lead(\alpha_m) & m(A,B) & \lead(m(A,B)) & \lead (\alpha_m m(A,B))\\
\hline\hline
 \frac{2 k (k-n)}{n \left(n^2-1\right)} & -2 K n^{\kappa-2} & \text{Tr}[B]^2 & \lambda^2 n^2 & -2 K \lambda^2 n^{\kappa} \\
 \hline
 \frac{2 k (k+1)}{n (n+1)} & 2 K^2 n^{2 \kappa-2} & \text{Tr}[A^2] & C \lambda n^{\zeta+1} & 2 C K^2 \lambda n^{2 \kappa+\zeta-1} \\
 -\frac{2 k (k n-1)}{n \left(n^2-1\right)} & -2 K^2 n^{2 \kappa-2} & \text{Tr}[B^2] & C \lambda n^{\zeta+1} & -2 C K^2 \lambda n^{2 \kappa+\zeta-1} \\
\end{array}\\
\end{align}
\end{table}
In this table, the first (double) column gives the coefficients~$\alpha_m$ of one of terms $m\in \{(\tr B)^2,\tr(A^2),
\tr(B^2)\}$, and its leading (highest) order as a function of~$n$ when  $k=O(n^\kappa)$ and when the
asymptotic bounds~\eqref{eq:asymptoticboundv} are used.
The second (double) column specifies the monomial~$m(A,B)$ and its leading order, and the third column gives the leading order of the product $\alpha_m m(A,B)$.  We use
two constants $K, C>0$: we assume that $k\leq Kn^\kappa$ and $\|B\|_\infty\leq C n^{\zeta}$
such that~\eqref{eq:asymptoticboundv}
takes the form
\begin{align}
\Big|\tr\left[\prod_{i=1}^n X_i\right]\Big|\leq \lambda(n) C^{m-1} n^{1+(m-1)\xi}\ ,
\end{align}
where $X_i\in \{A,B\}$ for each $i=1,\ldots,n$.

We remark that the term associated with the first line in this table is special and will be treated separately as we can explicitly compute $\tr(B)^2=\lambda^2 n^2$. That is, we have
\begin{align}
\frac{2 k (k-n)}{n \left(n^2-1\right)}\cdot \tr(B)^2 &=-2k \lambda^2+O(n^{2\kappa-1})\ .
\end{align}
and thus (combined with the above table)
\begin{align}
\E [\tr\left((JM)^2\right) ]=-2k \lambda^2+O(n^{2\kappa+\zeta-1})\ .\label{eq:secondmomentsrecovered}
\end{align}
Thus we recover Corollary~\ref{cor:secondmoments}.

The fourth moment~$\E [\tr\left((JM)^4\right) ]=\sum_m \alpha_m m(A,B)$ can be expressed similarly as a  linear combination
of degree-$4$-monomials in $A$ and $B$.  The corresponding coefficients are given by Table \ref{table:4th}.
\begin{center}
\begin{table}
\caption{Forth moments.}
\label{table:4th}
\begin{align}
{\tiny
\begin{array}{cc||cc||c}
\alpha_m & \lead(\alpha_m) & m(A,B) & \lead(m(A,B)) & \lead (\alpha_m m(A,B))\\
\hline\hline
\frac{2 k \left(-5 k^3+10 n k^2-\left(6 n^2+1\right) k+n^3+n\right)}{n \left(n^6-14 n^4+49 n^2-36\right)}&  2 K n^{\kappa-4} &  \text{Tr}[B]^4 & \lambda^4 n^4 & 2 K \lambda^4 n^{\kappa}\\
\hline
 -\frac{8 k \left(\left(n^2+1\right) k^3-n \left(n^2+11\right) k^2+11 \left(n^2+1\right) k-n \left(n^2+11\right)\right)}{n \left(n^6-14 n^4+49 n^2-36\right)} & 8 K^3 n^{3 \kappa-4} & \text{Tr}[B] \text{Tr}[B^3] & C^2 \lambda^2 n^{2 (\zeta+1)} & 8 C^2 K^3 \lambda^2 n^{3 \kappa+2 \zeta-2} \\
 \frac{4 k \left(5 n k^3-2 \left(4 n^2+9\right) k^2+n \left(3 n^2+28\right) k-10 n^2\right)}{n \left(n^6-14 n^4+49 n^2-36\right)} & 12 K^2 n^{2 \kappa-4} & \text{Tr}[B]^2 \text{Tr}[B^2] & C \lambda^3 n^{\zeta+3} & 12 C K^2 \lambda^3 n^{2 \kappa+\zeta-1} \\
 \frac{2 k \left(\left(n^3+n\right) k^3-20 n^2 k^2+5 n \left(n^2+13\right) k-4 \left(4 n^2+9\right)\right)}{n \left(n^6-14 n^4+49 n^2-36\right)} & 2 K^4 n^{4 \kappa-4} & \text{Tr}[B^4] & C^3 \lambda n^{3 \zeta+1} & 2 C^3 K^4 \lambda n^{4 \kappa+3 \zeta-3} \\
 \frac{2 k \left(\left(3-2 n^2\right) k^3+2 n \left(n^2+6\right) k^2-\left(16 n^2+21\right) k+n \left(n^2+21\right)\right)}{n \left(n^6-14 n^4+49 n^2-36\right)} & 4 K^3 n^{3 \kappa-4} & \text{Tr}[B^2]^2 & C^2 \lambda^2 n^{2 (\zeta+1)} & 4 C^2 K^3 \lambda^2 n^{3 \kappa+2 \zeta-2} \\
 -\frac{4 k \left((3 n+4) k^3-2 n (2 n+1) k^2+\left(n^3-3 n^2-n-4\right) k+n \left(n^2+n+4\right)\right)}{(n-2) (n-1) n^2 (n+1) (n+2) (n+3)} & -4 K^2 n^{2 \kappa-4} & \text{Tr}[B]^2 \text{Tr}[A^2] & C \lambda^3 n^{\zeta+3} & -4 C K^2 \lambda^3 n^{2 \kappa+\zeta-1} \\
 -\frac{4 k (k+1) \left((n+1) k^2-\left(n^2+1\right) k-(n-1) n\right)}{(n-1) n^2 (n+1) (n+2) (n+3)} & 4 K^3 n^{3 \kappa-4} & \text{Tr}[A^2]^2 & C^2 \lambda^2 n^{2 (\zeta+1)} & 4 C^2 K^3 \lambda^2 n^{3 \kappa+2 \zeta-2} \\
 \frac{2 k (k+1) \left(\left(n^2+n+2\right) k^2+\left(3 n^2-5 n-2\right) k+4 (n-1) n\right)}{(n-1) n^2 (n+1) (n+2) (n+3)} & 2 K^4 n^{4 \kappa-4} & \text{Tr}[A^4] & C^3 \lambda n^{3 \zeta+1} & 2 C^3 K^4 \lambda n^{4 \kappa+3 \zeta-3} \\
 \frac{4 k (k+1) \left(\left(n^2+5 n+4\right) k^2-\left(n^2+n+4\right) k-2 n (n+1)\right)}{(n-1) n^2 (n+1) (n+2) (n+3)} & 4 K^4 n^{4 \kappa-4} & \text{Tr}[(AB)^2] & C^3 \lambda n^{3 \zeta+1} & 4 C^3 K^4 \lambda n^{4 \kappa+3 \zeta-3} \\
 -\frac{8 k \left(\left(n^3+2 n^2-n-4\right) k^3+n \left(n^2-5 n-4\right) k^2+\left(n^3-8 n^2+5 n+4\right) k+n \left(n^2-n+8\right)\right)}{(n-2) (n-1) n^2 (n+1) (n+2) (n+3)} & -8 K^4 n^{4 \kappa-4} & \text{Tr}[A^2B^2] & C^3 \lambda n^{3 \zeta+1} & -8 C^3 K^4 \lambda n^{4 \kappa+3 \zeta-3} \\
 \frac{4 k \left(\left(2 n^2+3 n-4\right) k^3-2 n \left(n^2+n-1\right) k^2+\left(-n^3+n^2-5 n+4\right) k+n \left(n^2+5 n-4\right)\right)}{(n-2) (n-1) n^2 (n+1) (n+2) (n+3)} & -8 K^3 n^{3 \kappa-4} & \text{Tr}[A^2] \text{Tr}[B^2] & C^2 \lambda^2 n^{2 (\zeta+1)} & -8 C^2 K^3 \lambda^2 n^{3 \kappa+2 \zeta-2} \\
 \frac{8 k \left(\left(n^2+n+4\right) k^3+n \left(-n^2+n-8\right) k^2-\left(2 n^3-5 n^2+5 n+4\right) k+n \left(-n^2+5 n+4\right)\right)}{(n-2) (n-1) n^2 (n+1) (n+2) (n+3)} & -8 K^3 n^{3 \kappa-4} & \text{Tr}[B] \text{Tr}[A^2B] & C^2 \lambda^2 n^{2 (\zeta+1)} & -8 C^2 K^3 \lambda^2 n^{3 \kappa+2 \zeta-2} \\
\end{array}}
\end{align}
\end{table}
\end{center}
Again, we treat the first line in this table differently: we have
\begin{align}
\frac{2 k \left(-5 k^3+10 n k^2-\left(6 n^2+1\right) k+n^3+n\right)}{n \left(n^6-14 n^4+49 n^2-36\right)}
\cdot \tr(B)^4&=2k \lambda^4
+O(n^{2\kappa-1})\ .
\end{align}
With the above table and the fact that, by assuming $\kappa+\zeta\leq 1$ as before,
\begin{align}
\max \{2\kappa-1,2 \kappa+\zeta-1,3 \kappa+2 \zeta-2,4 \kappa+3\zeta-3)\}&= 2\kappa+\zeta-1 .
\end{align}
we conclude that 
\begin{align}
\E [\tr\left((JM)^4\right) ]&=2k \lambda^4+ O(n^{2\kappa+\zeta-1}).
\end{align}
This proves Lemma~\ref{lem:fourthmomentsexplicit}.

\end{appendices}


\def\cprime{$'$}

\end{document}